\newtheorem{lemma}{Lemma} 
\newtheorem{proposition}{Proposition} 
\newtheorem{theorem}{Theorem} 
\newtheorem{corollary}{Corollary} 
\newtheorem{claim}{Claim} 
\newtheorem*{positive-result*}{Positive result} 
\newtheorem*{negative-result*}{Negative result} 
\theoremstyle{definition}
\newtheorem{definition}{Definition}   
\newtheorem{remark}{Remark}   
\newtheorem{question}{Question}    
\newtheorem*{notation_nonumber}{Notation}
\newcommand{\eqp}{\stackrel{\log}{=}}
\newcommand{\gep}{\stackrel{\log}{\ge}}
\newcommand{\lep}{\stackrel{\log}{\le}}
\renewcommand{\vert}{\!\mid\!}
\providecommand{\keywords}[1]{\textbf{\textit{Keywords:}} #1}
\def\firstcircle{(150:1.75cm) circle (2.5cm)}
\def\secondcircle{(30:1.75cm) circle (2.5cm)}
\def\thirdcircle{(270:1.75cm) circle (2.5cm)}
\title{Algebraic Barriers to Halving Algorithmic Information Quantities in Correlated Strings\footnote{%
A short version of this paper has been accepted to the  MFCS 2025 conference.
The extended version contains a more detailed introduction, 
a new Section~\ref{sec:no-halving-raz} with a proof based on Razenshteyn's theorem,
and a discussion of different approaches to the proof of  Theorem~\ref{th:halving-log}  in Appendix.
}}
\author{Andrei Romashchenko}
\begin{document}

\maketitle

\begin{abstract}
We study the possibility of scaling down algorithmic information quantities in tuples of correlated strings. 
In particular, we address a question raised by Alexander Shen: whether, for any triple of strings \((a, b, c)\), 
there exists a string \(z\) such that each conditional Kolmogorov complexity \(C(a|z), C(b|z), C(c|z)\) is approximately half of the corresponding unconditional Kolmogorov complexity. 
We give a negative answer to this question by constructing a triple \((a, b, c)\) for which no such string \(z\) exists. 
Moreover, we construct a fully explicit example of such a tuple.
Our construction is based on combinatorial properties of incidences in finite projective planes and relies on  bounds for point-line incidences over prime fields.
As an application, we show that this impossibility yields lower bounds on the communication complexity of secret key agreement protocols in certain settings. 
These results reveal algebraic obstructions to efficient information exchange and highlight a separation in information-theoretic behavior between 
fields with and without proper subfields.
\end{abstract}

\keywords{Kolmogorov complexity, algorithmic information theory, common information, communication complexity, discrete geometry}

\newpage

\section{Introduction}
Algorithmic information theory (AIT), introduced and developed  in the 1960s by 
Solomonoff \cite{solomonoff1,solomonoff2,solomonoff3}, Kolmogorov~\cite{kolmogorov1965}, and Chaitin~\cite{chaitin1969},  
aims to define the amount of information in a discrete object and to quantify the information shared between several objects.
The crucial difference from Shannon's information theory is that AIT is  interested not in an \emph{average} compression rate 
(for a given distribution of probabilities) but in the optimal compression of some specific \emph{individual} object.
Informally, the information content of an individual object (for example, a string or a text)
 is defined as the minimal length of a program that produces that object. 
The length of the shortest program producing a string $x$ 
is called the Kolmogorov complexity of $x$ and denoted $C(x)$.
Similarly, the conditional Kolmogorov complexity of $x$ given $y$, denoted $C(x\vert y)$,
is the  length of an optimal program producing a string $x$ given $y$ as input.
Note that $C(x\vert y)$ makes sense even for infinite $y$ 
(this quantity can be interpreted as the length of the shortest program that produces a string $x$ while having access to the oracle $y$). 
A string $x$ is called random or incompressible if $C(x) \approx |x|$.
The value of Kolmogorov complexity depends on the chosen programming language. 
However,  it is known that there exist optimal programming languages making the complexity function minimal up to a bounded additive term. 

AIT is tightly connected with classical Shannon information theory. 
The techniques of Kolmogorov complexity are used in various problems of theoretical computer science and discrete mathematics. 
Time-bounded Kolmogorov complexity has deep links with computational complexity and theoretical cryptography,
see, e.g., the surveys \cite{survey-fortnow} and \cite{survey-oliveira}.
An extensive introduction to AIT and the theory of Kolmogorov complexity can be found, for example, in the classical paper \cite{zl},
in the textbooks \cite{li-vitanyi,suv} or in the surveys \cite{vyugin-survey1,vyugin-survey2,vyugin-survey3}.

\smallskip

One of the fundamental questions of AIT is the characterization of  the possible values of Kolmogorov complexity for tuples of strings. For example, for any triple of strings $x_1,x_2,x_3$,
we have seven values of Kolmogorov complexity (sometimes called \emph{complexity profile} of $(x_1,x_2,x_3)$): 
\[
C(x_1),\ C(x_2), \ C(x_3),\ C(x_1,x_2),\ C(x_1,x_3), \ C(x_2,x_3), \ C(x_1,x_2,x_3).
\]
Which vectors of seven positive numbers can be realized as the vector of Kolmogorov complexities of some $x_1,x_2,x_3$? 
When $n$ strings $x_i$ (for $i=1,\ldots,n$) are involved, we may consider Kolmogorov complexities of all tuples $C(x_{i_1},\ldots, x_{i_s})$ for all selections of indices $1\le i_1 < i_2 < \ldots <i_s \le n$,
and the question is: which vectors of $2^{n}-1$ positive numbers can be realized as the Kolmogorov complexity of some $(x_1,\ldots,x_n)$.
This problem appears to be more combinatorial than algorithmic in nature. 
The answer to this question does not depend significantly on the choice of the optimal programming language.
Moreover, even relativization of Kolmogorov complexity with respect to an oracle 
(when we replace the plain Kolmogorov complexity by Kolmogorov complexity in the sense of programs with access to some fixed oracle)
would not change the answer to this question significantly, as the following proposition shows.
\begin{proposition}[folklore]
\label{p:unrelativize}
For every tuple of finite binary strings $(y_1,\ldots, y_n)$ and for every $z$ (finite or infinite),  
there exists a tuple $(x_1,\ldots, x_n)$ such that for all index sets  
$1\le i_1 < i_2 < \ldots < i_s \le n$,
\[
C(x_{i_1},x_{i_2},\ldots,x_{i_s})
= C(y_{i_1},y_{i_2},\ldots,y_{i_s} \mid z)
\pm O(\log C(y_1,\ldots, y_n)).
\]
In other words, if a vector of $2^n-1$ reals serves as the complexity profile for some $(y_1,\ldots, y_n)$ conditional on some oracle $z$,
then a vector of almost the same numbers serves as  the complexity profile for some other strings $(x_1,\ldots, x_n)$ for the values of their plain Kolmogorov complexities (without any oracle).
\end{proposition}
For completeness,  we prove this proposition in Appendix~\ref{s:appendixB}.

\smallskip

Thus, characterizing the class of all possible complexity profiles is a natural question. 
What are the constraints linking different components of such complexity vectors?
Some of these constraints are simple and standard. 
There are, for example, classical inequalities
\[
C(a) \le C(a,b)  + O(1),
\]
(monotonicity)
\[
C(a,b) \le C(a) + C(b) + O(\log C(a,b))
\]
(subadditivity, or non-negativity of the mutual information) and 
\[
C(a,b,c) + C(c)\le C(a,c) + C(b,c) + O(\log C(a,b,c))
\]
(submodularity, or non-negativity of the conditional mutual information).
We can substitute any tuples of strings for $x_i$ in these inequalities.
For instance, a possible instantiation of subadditivity is 
\[
C(x_1,x_2,x_3,x_4) \le C(x_1,x_2) + C(x_3,x_4) + O(\log C(x_1,x_2,x_3,x_4))
\]
(here we substituted for $a$  and $b$ the tuples $(x_1,x_2)$ and $(x_3,x_4)$ respectively).
It is known that for triples of strings, no other linear inequalities for Kolmogorov complexity exist that differ substantially
from the inequalities mentioned above.
More precisely, any  linear inequality for Kolmogorov complexity of  $x_1,x_2,x_3$ (valid up to an additive term $o(C(x_1,x_2,x_3))$)
is necessarily a positive linear combination of several instances of monotonicity, subadditivity, and submodularity, \cite{hammer2000}. 
However, when four or more strings are involved (so we have $\ge 2^4-1 = 15$ quantities of Kolmogorov complexity), 
there also exist different   linear inequalities  (usually called \emph{non-Shannon-type} inequalities)
that are less intuitive in form and 
cannot be represented as linear combinations of monotonicity, subadditivity, and submodularity; see e.g. the survey \cite{survey-non-shannon}.
It is known that exactly the same linear inequalities hold for Kolmogorov complexity and for Shannon entropy, but the problem of precise characterization 
of these inequalities for $n\ge 4$ objects remains open.

While the questions on linear inequalities for Kolmogorov complexity and for Shannon's entropy are known to be equivalent, from other perspectives, questions about Kolmogorov complexity appear more difficult than 
similar questions about Shannon's entropy. It is not known, for example, whether complexity profiles can be scaled with any factor $\lambda>0$ 
(even up to a logarithmic additive term).
More specifically, the following question is open:
\begin{question}\label{q:1}
Let $\lambda$ be a positive real number.
Is it true that for every $k$-tuple of strings $(x_1,\ldots, x_k)$ 
there exists another $k$-tuple $(x_1',\ldots, x_k')$ such that 
\[
C(x_{i_1}',x_{i_2}',\ldots,x_{i_s}')  =  \lambda C(x_{i_1},x_{i_2},\ldots,x_{i_s}) + O(\log C(x_1,\ldots, x_k))
\] 
for all tuples of indices $(i_1,\ldots, i_s)$, $1\le i_1 < i_2 < \ldots <i_s \le k$ ?
\end{question}
The answer to this question is known to be positive for $k\le 3$ and any $\lambda$, and for any $k$ and integer $\lambda$. 
For non-integer factors, e.g., for $\lambda=1/2$, the question is open for all $k\ge 4$, see \cite{27open}.
Alexander Shen posed  another question (see a comment to Question~1 in \cite{27open}):
\begin{question}\label{q:2}
Let $\lambda<1$ be a positive real number.
Is it true that for every $k$-tuple of strings $(x_1,\ldots, x_k)$ 
there exists a string $z$  such that 
\[
C(x_i \vert z)  =  \lambda C(x_i) + O(\log C(x_1,\ldots, x_k))
\] 
for $i=1,\ldots, k$ ?
\end{question}
A positive answer to Question~\ref{q:2} would imply a positive answer to Question~\ref{q:1}
(see Corollary~\ref{cor:q2->q1} in Appendix~\ref{s:appendixB}). 
Moreover, Question~\ref{q:2} is interesting in its own right as a special case of the problem of \emph{extending complexity profiles},
which generalizes the classical notions of \emph{common information} (due to Gács–Körner~\cite{gk1973} and Wyner~\cite{wyner}).
The corresponding question can be formulated more generally as follows. 
\begin{question}[informal]
\label{q:3}
For each given $k$-tuple of strings $(x_1,\ldots, x_k)$, 
what can we say about possible values
\[
 \{C(x_{i_1},x_{i_2},\ldots,x_{i_s},z)\}_{1\le i_1 < \ldots <i_s\le k}
\] 
achievable with various strings $z$?
\end{question}

It is known that the answer to Question~\ref{q:2} is positive for $k=1$ and for $k=2$ (see Section~\ref{sec:halving}). 
In this paper we give a negative answer to this question for $k=3$, even for $\lambda=1/2$. 
We show that there exists a triple of strings $(a,b,c)$ such that there is no $z$ which ``halves'' the complexities of each of them,
\begin{equation}
\label{eq:halving}
C(a\vert z) \approx \frac12 C(a),\ C(b \vert z) \approx \frac12 C(b),\ C(c \vert z) \approx \frac12 C(c).
\end{equation}
We construct an example of a triple $(a,b,c)$ for which, for every $z$ satisfying 
\(
C(a \vert z) \approx \frac12 C(a),\ C(b \vert z) \approx \frac12 C(b), 
\)
the value of $C(c\vert z)$ must be significantly smaller than $\frac12 C(c)$. 
We prove this statement in Section~\ref{sec:no-halving}.

\subsection{Existential proof vs. explicit constructions}

The negative answer to Question~\ref{q:2} for a \emph{triple} of strings $(a,b,c)$ can be reformulated as a statement about the information-theoretic properties of a certain \emph{pair} of strings $(x,y)$.
In fact, it suffices to show that for every $n$ there exists a pair of strings $(x,y)$ such that:
\begin{itemize}
\item[(i)] $C(x)\approx 2n$, $C(y)\approx 2n$, $C(x,y)\approx 3n$, and
\item[(ii)] there is no $z$ such that $C(x|z)\approx n$, $C(y|z)\approx n$, and $C(x,y|z)\approx 1.5n$.
\end{itemize}
Then we can define $a := x$, $b := y$, and $c := xy$ (i.e., $c$ is the concatenation of the two strings, or possibly an encoding of the pair $(x,y)$),
and such a triple $(a,b,c)$ yields a negative answer to Question~\ref{q:2}.

The existence of a pair $(x,y)$ with the required properties (i) and (ii) follows from a result on the impossibility of materializing mutual information
(large gap between mutual information and common information), proven in \cite{razenshteyn}, see Section~\ref{sec:no-halving-raz} for details.
Formally speaking, this already resolves Question~\ref{q:2}.
However, the underlying proof is not entirely satisfactory.
The reason is that the argument in~\cite{razenshteyn} is based on the probabilistic method: it is purely existential and does not provide an explicit construction of the required tuple of strings.
More precisely, the argument in \cite{razenshteyn} implies the existence of a set $S_n$ such that:
\begin{itemize}
\item $S_n$ consists of $2^{3n + o(n)}$ elements;
\item the complete list of elements in $S_n$ can be enumerated by a program of size $o(n)$;
\item most triples of strings in $S_n$ satisfy the required properties (i) and (ii).
\end{itemize}
However, the proof in~\cite{razenshteyn} sheds no light on the combinatorial structure of the set $S_n$ or the elements it contains; the argument reveals  no symmetries or other natural properties of these objects.
Moreover, enumerating the elements of $S_n$ requires double-exponential time in $n$ (brute-force search).
Thus, the pair $(x,y)$ obtained via~\cite{razenshteyn} appears as an unnatural object, lacking any interesting mathematical structure.

Should Question~\ref{q:2} be revisited with respect to more natural tuples $(a,b,c)$?
V'yugin argued that objects arising in “real life” are usually \emph{typical} elements of \emph{simple} sets, see, e.g.,  \cite{vyugin-most-stochastic}.
This idea traces back to Kolmogorov's definition of $(\alpha,\beta)$-stochasticity, see \cite{shen-non-stochastic}.
Recall that an object $w$ is called $(\alpha,\beta)$-stochastic if there exists a set $W$ such that:
\begin{itemize}
\item there is a program of length at most $\alpha$ that prints the list of all elements in $W$ (and halts),
\item $C(w\vert W) \ge \log |W| - \beta$, which means that $w$ is essentially indistinguishable from many other  elements of $W$, and its membership in $W$ is its only significant property.
\end{itemize}
Not all strings are stochastic: provided that $\alpha$ and $\beta$ are much smaller than $n$, there exists an $n$-bit string $w$ that is not $(\alpha,\beta)$-stochastic (see \cite{shen-non-stochastic} and also \cite{vyugin-1985,vyugin-non-stochastic,vyugin-most-stochastic}).
Nevertheless, most objects arising naturally in mathematics and computer science satisfy the definition of $(\alpha,\beta)$-stochasticity for small parameters $\alpha,\beta$
(often of order $O(\log n)$).
V'yugin presented strong evidences that  ``\emph{data sequences normally occurring in the real world are stochastic},''  \cite{vyugin-most-stochastic}.
Moreover, we believe that in most relevant applications the set $W$ in the definition of stochasticity can be made \emph{highly explicit}
(in particular, membership in $W$ can be verified in polynomial time).

Revisiting the proof in~\cite{razenshteyn}, we make the following observations.
On one hand, the pairs $(x,y)$ constructed there do satisfy Kolmogorov's definition of $(O(\log n),O(\log n))$-stochasticity.
On the other hand, the set $S_n$ produced in the proof is by no means explicit.
We believe that membership in $S_n$ cannot be tested in less than double-exponential time.
Furthermore, for a typical pair $(x,y)\in S_n$, we expect a significant gap between the plain Kolmogorov complexities
\[
C(x), \ C(y),\ C(x,y),\ C(x|y),\ C(y|x)
\]
and the corresponding values of Levin’s $Kt$-complexity (see \cite{levin-kt} and \cite[Section~14]{vyugin-survey2}).
In other words, even though these objects admit short descriptions, the corresponding programs run for extremely long time, which indicates that they are somewhat exotic objects.
This lack of structure suggests that the construction in \cite{razenshteyn} is not well-suited for applications in areas such as communication complexity or coding theory.

In light of V'yugin’s perspective, we arrive at a stronger version of Question~\ref{q:2}:
does there exist an \emph{explicit and natural} example of a triple $(a,b,c)$  satisfying the requirements~\eqref{eq:halving}?
A related question:
does there exist an \emph{explicit and natural} pair $(x,y)$ satisfying conditions (i) and (ii) stated above?

The main result of this paper gives a positive answer to this stronger version of Question~\ref{q:2}.
Of course, the notion of explicitness and naturalness is somewhat subjective.
Nevertheless, we believe  that our construction has a clear algebraic and geometric interpretation and meets the standard criteria of explicitness commonly accepted in the AIT community.
In what follows, we describe this construction in more detail.

\subsection{The main construction}

Thus, the main goal of this paper is to provide a more explicit construction of a tuple that yields a negative answer to Question~\ref{q:2}.
We propose such a construction based on \emph{incidences in a finite projective plane}. 
We fix a finite field $\mathbb{F}$, take the projective plane over this field, and consider pairs $(x,y)$, where $x$ is a line in this plane and $y$ is a projective line passing through a point.
We call such pairs \emph{incidences}. An incidence in a projective plane is a classical combinatorial object, and its properties were extensively studied in different contexts.
Incidences were already studied in the context of AIT, see, e.g., \cite{muchnik-common-info,cmrsv}.

In a projective plane over $\mathbb{F}$ there are $\Theta(|\mathbb{F}|^2)$ points, $\Theta(|\mathbb{F}|^2)$ lines, and $\Theta(|\mathbb{F}|^3)$ incidences.
For the vast majority of incidences $(x,y)$ we have
\begin{equation}
\label{eq:typical}
C(x) \approx 2 \log |\mathbb{F}|, \ C(y) \approx 2 \log |\mathbb{F}|, C(x,y) \approx 3 \log |\mathbb{F}|.
\end{equation}
The upper bounds are trivial:   to specify a point or a line in a projective plane, it is enough to provide two elements of $\mathbb{F}$;
to specify together a point and a line incident to this point, it is enough to provide  three  elements of $\mathbb{F}$. 
The lower bound follows from a simple counting argument: the number of programs (descriptions) shorter than $k$ is less than $2^k$; therefore, for most incidences $(x,y)$
there is no short description, and $C(x,y) \approx 3 \log |\mathbb{F}|$. A similar argument implies $C(x) \approx 2 \log |\mathbb{F}|$  and $C(y) \approx 2 \log |\mathbb{F}|$.
We call an incidence $(x,y)$ \emph{typical} if it satisfies  \eqref{eq:typical}.

Thus, for a typical incidence $(x,y)$ the mutual information between $x$ and $y$ is
\[
I(x:y) := C(x) + C(y) - C(x,y) \approx  \log |\mathbb{F}|.
\]
An.~Muchnik observed in \cite{muchnik-common-info} that the mutual information of an incidence is hard to ``materialize,''  i.e., 
we cannot find a $z$ that ``embodies'' this amount of information shared by $x$ and $y$.
More formally, Muchnik proved that 
\begin{equation}
\nonumber
\text{there is no } z \text{ such that }C(z\vert x) \approx 0,\ C(z\vert y) \approx 0, \ C(z) \approx I(x:y).
\end{equation}
This insight did not close the question completely: the optimal  trade-off between $C(z\vert x)$, $C(z\vert y)$, $C(z)$ is still not fully understood. 
 Our work follows this direction of research. 
  We prove that for \emph{prime fields} $\mathbb{F}$, some specific values of $C(z\vert x)$, $C(z\vert x)$, and $C(z)$ are forbidden:
 \begin{equation}
\label{eq:main-result}
 \begin{array}{l}
 \text{For a prime $\mathbb{F}$, for a typical incidence $(x,y)$  there is no $z$ such that} \\
 C(z) \approx 1.5 \log |\mathbb{F}|, \ C(z\vert x) \approx 0.5\log |\mathbb{F}|, \  C(z\vert y) \approx 0.5\log |\mathbb{F}|, \  C(z\vert x,y) \approx 0, 
 \end{array}
 \end{equation}
see a more precise statement in Theorem~\ref{th:prime-field} on p.~\pageref{th:prime-field}. 
This result contrasts with a much simpler fact proven in \cite{shen-romashchenko}:
   \begin{equation}
\label{eq:anti-main-result}
 \begin{array}{l}
 \text{If  $\mathbb{F}$ contains a subfield of size  $\sqrt{|\mathbb{F}|}$, then 
  for a typical incidence $(x,y)$  there exists} \\
 \text{a } z \text{ such that }  
 C(z) \approx 1.5 \log |\mathbb{F}|, \ C(z\vert x) \approx  C(z \vert y) \approx 0.5\log |\mathbb{F}|, \  C(z\vert x,y) \approx 0,
 \end{array}
 \end{equation}
 see Theorem~\ref{th:cmrsv} on p.~\pageref{th:cmrsv}.
  
 Our proof of \eqref{eq:main-result} uses a remarkable result by Sophie Stevens and Frank De Zeeuw, 
 which gives a non-trivial upper bound on the number of incidences between points and lines in a plane over a prime field~\cite{tao_improved}.
The first theorem of this type was proven by Bourgain, Katz, and Tao, \cite{tao}. 
This result has been improved further in \cite{tao+1,tao+2,tao+3}.
We use the bound  from \cite{tao_improved}, the strongest to date.

Typical incidences in the projective plane over a prime field imply the negative answer to Question~\ref{q:2}.
We can prove this by following the idea sketched in the previous section: if $(x, y)$ is a typical incidence, we let
\[
a:=x,\ b:=y,\ c:=\langle x,y\rangle
\]
and  show that for every string $z$ satisfying the conditions 
\(
C(a\vert z) \approx \frac12 C(a) \text{ and }C(b\vert z) \approx \frac12 C(b),
\)
the value of $C(c\vert z)$ must be much smaller than $\frac12 C(c)$,  
see Corollary~\ref{th:shen's-question}.

\subsection{Application: impossibility results for secret key agreement}

The main result \eqref{eq:main-result} can be interpreted as a partial (very limited in scope)  answer to Question~\ref{q:3},  as it claims that for some specific pairs $(x,y)$
(typical incidences) there exist limitations for realizable complexity profiles of triples $(x,y,z)$. 
It is no surprise that  this fact can be used to prove certain \emph{no-go} results in communication complexity, for settings where the participants of the protocol are given 
such $x$ and $y$ as their inputs.
We present an example of such result --- a theorem on secret key agreement protocols, as we explain below. 

Unconditional \emph{secret key agreement} is one of the basic primitive in information-theoretic cryptography, \cite{unconditional-cryptography}.
In the simplest setting, this is a protocol for two parties, Alice and Bob. 
At the beginning of the communication, Alice and Bob are given some input data, $x$ and $y$ respectively.
It is assumed that $x$ and $y$ are strongly correlated, i.e., the mutual information between $x$ and $y$ is non-negligible.
Further, Alice and Bob exchange messages over a public channel and obtain (on both sides) some string $w$ that is incompressible (i.e., $C(w)$ is close to its length)
and has negligible mutual information with the transcript of the protocol, i.e.,
\[
C(w\vert \text{concatenation of all messages sent by Alice and Bob}) \approx |w|.
\]
Thus, Alice and Bob transform the mutual information between $x$ and $y$ into a common secret key 
(which can later be used, for example, into a one-time-pad or some other unconditionally secure cryptographic scheme). 
The \emph{secrecy} of the key  means that 
an eavesdropper should get (virtually) no information about this key, even having intercepted all communication between Alice and Bob. 
For a more detailed discussion of the secret key agreement  in the framework of AIT we refer the reader to \cite{andrei-marius,emirhan}.
\begin{remark}
In this paper, we assume that the communication protocol is \emph{uniformly computable};
that is, Alice and Bob exchange messages and compute the final result according to a single algorithmically defined rule that applies uniformly to inputs of all lengths.
We also assume that the protocol is public (i.e., known to an eavesdropper), so no secret information can be hardwired into the protocol description;
see \cite[Remark 1]{emirhan} and \cite[Remarks 2, 4, 13]{andrei-marius} for a more detailed discussion of the communication model.
\end{remark}
The challenges in secret key agreement are to (i) maximize the size of the secret key and (ii) to minimize the communication complexity of the protocol
(the total length of messages sent to each other by Alice and Bob). It is known that 
the maximum size of the secret key is equal to the mutual information between $x$ and $y$, i.e.,
\(
I(x:y) = C(x) + C(y) -C(x,y)
\)
(see  \cite{andrei-marius} for the proof in the framework of AIT  and \cite{secret-key-1,secret-key-2} for the original result in the classical Shannon's settings).
There exists a communication protocol that allows to produce a secret of optimal size with communication complexity  
\begin{equation}
\label{eq:cc}
\max\{C(x\vert y), C(y\vert x)\},
\end{equation}
 see \cite{andrei-marius}, and this communication complexity is tight, at least  for some ``hard'' pairs of inputs $(x,y)$, see \cite{emirhan}. 
 Moreover, subtler facts are known:
 \begin{itemize}
 \item the standard protocol achieving \eqref{eq:cc} (the construction dates back to \cite{secret-key-1,secret-key-2}; 
 see \cite{andrei-marius} for the AIT version) 
 is highly  asymmetric: all  messages are sent by only one party (Alice or Bob); 
 \item for some pairs of inputs $(x,y)$, if we want to agree on a secret key of maximal possible size $I(x:y)$, 
  not only the  total communication complexity must be equal to \eqref{eq:cc}, but actually 
 \emph{one of the parties} (Alice or Bob) must send 
$ \max\{C(x\vert y), C(y\vert x)\}$
bits of information, \cite{geoffroy-rustam};
 \item  for some pairs of inputs $(x,y)$, the total communication complexity 
 $\max\{C(x\vert y), C(y\vert x)\}$
cannot be reduced 
 \emph{even if the parties  need to agree on a sub-optimal secret key of size} $\delta n$ (for any constant $\delta>0$), see \cite{emirhan}.
 \end{itemize} 
 It remains unknown whether we can always organize a protocol of secret key agreement where the communication complexity \eqref{eq:cc} is shared evenly by the parties 
 (both Alice and Bob send $\frac12C(x\vert y) $ bits) if they need to agree on a key of sub-optimal size, e.g., $\frac12 I(x:y)$.

When we claim that communication complexity of a protocol is large \emph{in the worst case}, i.e., 
 Alice and Bob must send to each other quite a lot of bits \emph{at least for some pairs of inputs},
it is enough to prove this statement of some specific pair of data sets $(x,y)$. 
Such a  proof may become simpler when we use $(x,y)$ with nice combinatorial properties, even though these inputs may look artificial and unusual for practical applications.
Such is the case with the mentioned lower bounds for communication complexity  proven in  \cite{emirhan} and \cite{geoffroy-rustam}.
Both these arguments employ as an instance of a ``hard'' input $(x,y)$  a typical incidence in a finite projective plane.
Thus, it is natural to ask whether, for these specific $(x,y)$, it is possible to agree on a secret key of sub-optimal size using a \emph{balanced} 
communication load --- that is, when Alice and Bob each send approximately the same number of bits, roughly half the total communication complexity.
We show, quite surprisingly, that the answer to this question depends on whether the field admits a proper subfield:

\begin{positive-result*}
If the field $\mathbb{F}_q$ contains a subfield of size $\sqrt{q}$, then there exists a \emph{balanced} communication protocol with communication complexity $\log q$
where 
\begin{itemize}
\item Alice sends to Bob $\approx 0.5 \log q$ bits,
\item Bob sends to Alice $\approx 0.5 \log q$ bits,
\end{itemize}
and the parties agree on a secret key of length $\approx 0.5 \log q$, which is incompressible 
even conditional on the transcript of the communication between Alice and Bob.
\end{positive-result*}

\begin{negative-result*}
If the field $\mathbb{F}_q$ is prime, then in every \emph{balanced} communication protocol  with communication complexity $\log q$
such that 
\begin{itemize}
\item Alice sends to Bob $\approx 0.5 \log q$ bits,
\item Bob sends to Alice $\approx 0.5 \log q$ bits,
\end{itemize}
 the parties \emph{cannot} agree on a secret key of length $  \approx  0.5\log q$ 
 or even of any length $> \frac37 \log q$ 
 (the secrecy of the key means that the key must remain incompressible
even conditional on the transcript of the communication between Alice and Bob).
\end{negative-result*}
For a more precise statements see Theorem~\ref{th:ska-positive} and Theorem~\ref{th:ska-negative} respectively.

\subsection{Techniques}

A projective plane is a classical geometric object, and combinatorial properties of discrete projective planes have been studied with a large variety of mathematical techniques.
It is no surprise that, in the context of AIT, the information-theoretic properties of incidences in discrete projective planes have been studied using many different  mathematical tools.
In this paper we bring to AIT another (rather recent) mathematical technique that helps distinguish  information-theoretic properties of projective planes over
prime fields and over fields  containing proper subfields.

As we mentioned above, we apply 
the new approach to  the problem of  secret key agreement:
we consider  the setting where Alice and Bob receive as  inputs data sets $x$ and $y$  such that $(x,y)$ is a ``typical''
incidence in a projective plane ($x$ is a line and $y$ is a point incident to this line) 
over a finite field $\mathbb{F}$ with $n = \lceil \log  |\mathbb{F}| \rceil$. 
We summarize in Table~\ref{tab:1} below several technical results concerning this communication problem, and the techniques in the core of these results.

\begin{table}[h]
\begin{tabular}{| l | l |}
\hline
for any protocol of secret key agreement, & information-theoretic techniques:  \\
the size of the secret key $\lesssim  I(x:y) \approx n$ \cite{andrei-marius}  & $\text{internal informat.\,cost}\,{\le}\,\text{external informat.\,cost}$ \\ 
& (not specific for lines and points)\\
\hline
 $|\text{Alice's messages}| + |\text{Bob's messages}| \gtrapprox n$, & spectral method, expander mixing lemma \\
even for a secret key of size $\epsilon n$ \cite{emirhan}  &  (applies to all fast-mixing graphs, including\\ 
&the incidence graph of a projective plane)\\
\hline 
 $|\text{Alice's messages}|  \gtrapprox n$ or  $ |\text{Bob's messages}| \gtrapprox n$  if  & combinatorics of a projective plane\\
 the parties agree on a secret key of size  $ \approx n$, \cite{geoffroy-rustam} & (applies to all projective planes)\\
\hline
for incidences in a plane over a prime field if& additive combinatorics, algebraic and\\
 $|\text{Alice's messages}|  \approx 0.5 n$\;and\;$ |\text{Bob's messages}| \approx 0.5 n$ &geometric methods \cite{tao,tao+1,tao+2,tao+3,tao_improved} \\
  then the size of the  secret key   $\ll 0.5n$, \textbf{[this paper]}&(\textbf{applies to only projective planes} \\
 &\textbf{over prime fields}) \\
\hline
\end{tabular}
\caption{Bounds for secret key agreement in the framework of AIT}
\label{tab:1}
\end{table}
One of the motivations for writing this paper was to promote the notable results of \cite{tao,tao+1,tao+2,tao+3,tao_improved}, 
which presumably can find interesting applications in AIT and communication complexity.

\subsection{Organization} The rest of the paper is structured as follows. 
In Section~\ref{sec:halving} we briefly discuss \eqref{eq:anti-main-result}
(known from \cite{shen-romashchenko}).
In Section~\ref{sec:technical} we explain the (pretty standard) correspondence between information-theoretic and combinatorial properties of the incidences $(\text{line}, \text{point})$
on a descrete projective plane.
In Section~\ref{sec:no-halving} we formally prove our main result~\eqref{eq:main-result}.
In Section~\ref{sec:ska} we discuss an application of the main result: we show that the performance of the  secret key agreement 
for Alice and Bob  given as inputs an incident pair $(x, y)$ (from a projective plane) 
differs between fields that do and do not contain proper subfields.

\subsection{Notation}

\begin{itemize}
\item $|\cal S|$ stands for the cardinality of a finite set $\cal S$
\item we write $F(n) \ll G(n)$ if $G(n) - F(n) = \Omega(n)$ (e.g., $\frac{22n}{15} \ll \frac{3n}2$)
\item for a bit string $x$ we denote by $x_{k:m}$ a factor of $x$ that consists of $m-k+1$ bits at the positions between $k$ and $m$
(in particular, $x_{[1:m]}$ is a prefix of $x$ of length $m$);
\item we denote $\mathbb{FP}$ the projective plane over a finite field $\mathbb{F}$;
 \item $G=(R,L;E)$ stands for a bipartite graph where $L\cup R$ (disjoint union)
 is  the set of vertices and $E\subset L\times R$ is the set of edges;
\item $C(x)$ and $C(x\vert y)$ stand for Kolmogorov complexity of a string $x$ and, respectively, conditional Kolmogorov complexity  
of $x$ conditional on $y$, see \cite{li-vitanyi,suv}.
We use a similar notation for more involved expressions, e.g., 
$C(x,y\vert v,w)$ denotes Kolmogorov complexity of the code of the pair  $(x, y)$ conditional on the code of another pair $(v, w)$
\item we also talk about Kolmogorov complexity of more complex combinatorial objects (elements of finite fields, graphs, points and lines in a discrete projective plane, and so on)
assuming that each combinatorial object is represented by its \emph{code} 
(for some fixed computable encoding rule)
\item $I(x:y) := C(x)+C(y) - C(x,y)$ and $I(x:y\vert z) := C(x\vert z)+C(y\vert z) - C(x,y\vert z)$ stand for   information in $x$ on $y$ and, respectively,
 information in $x$ on $y$ conditional on $z$
\end{itemize}
Many natural equalities and inequalities for Kolmogorov complexity are valid only up to a logarithmic additive term, e.g., 
\(
C(x,y) = C(x) + C(y\vert x) \pm O(\log n),
\)
where $n$ is the sum of lengths of $x$ and $y$ (this is the chain rule a.k.a. Kolmogorov--Levin theorem, see \cite{zl}). To simplify the notation, we
write 
$
A\lep B
$
instead of 
\(
A \le B + O(\log N),
\)
where $N$ is the sum of lengths of all strings involved in the expressions $A$ and $B$. 
Similarly we define $A\gep B$ (which means $B\lep A$) and $A\eqp B$ (which means $A\lep B$ and $B\lep A$). 
For example, the chain rule can be expressed as
\(
C(x,y) \eqp C(x)+C(y\vert x).
\)

\section{Halving complexities of two strings}
\label{sec:halving}

In this section we discuss the positive answer to Question~\ref{q:2} for $k=1,2$ and $\lambda=1/2$. 
These results were proven in \cite{shen-romashchenko}. Here we recall the main ideas and technical tools behind this argument.

First of all, we observe that Question~\ref{q:2} for $k=1$ and $\lambda=1/2$ is pretty trivial. 
Given a string $x$ of length $N$, we can try $z=x_{[1:k]}$ for $k=0,\ldots N$. 
It is clear that for $k=0$ we have $C(x\vert x_{[1:k]}) = C(x)+O(1)$, and for $k=N$ we obtain $C(x\vert x_{[1:k]}) = O(1)$.
At the same time, when we add to the condition $z$ one bit, the conditional complexity $C(x\vert z)$ changes by only $O(1)$.
It follows immediately that for some intermediate value of $k$ we obtain  $z=x_{[1:k]}$ such that $C(x\vert z) = \frac12C(x) + O(1)$.

This argument employ (in a very naive from) the same intuition as the intermediate value theorem for continuous functions.
The case $k=2$ is more involved, but it also can be proven with ``topological'' considerations.

\begin{theorem}
\label{th:halving-log}
For all strings $a,b$ of complexity at most $n$ there exists a string $z$ such that 
\[
\left| C(a \vert z) - \frac12C(a) \right|  = O(\log n) \text{ and } \left| C(b \vert  z) - \frac12 C(b) \right| = O(\log n).
\]
\end{theorem}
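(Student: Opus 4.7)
The plan is to lift the one-dimensional intermediate-value argument used for $k=1$ to a two-dimensional topological (winding-number) argument on a grid of candidate conditions. Concretely, for each lattice point $(i,j)$ in a suitable two-dimensional grid I would exhibit a candidate $z_{i,j}$ and study the ``profile map''
\[
F(i,j):=\bigl(C(a\mid z_{i,j}),\,C(b\mid z_{i,j})\bigr)\in\mathbb{R}^{2}.
\]
The goal is to produce a lattice point with $|F(i,j)-T|=O(\log n)$ at the target $T:=(C(a)/2,C(b)/2)$.

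The simplest candidate family to try is the prefix-prefix one, $z_{i,j}=\langle a^{*}_{[1:i]},\,b^{*}_{[1:j]}\rangle$, where $a^{*},b^{*}$ are shortest programs for $a,b$. For this $F$ I would verify (i) \emph{discrete continuity} $|F(i',j')-F(i,j)|\lep O(\log n)$ for grid neighbours (one-bit changes of $z_{i,j}$ shift $F$ by at most $O(\log n)$ via the chain rule); and (ii) the \emph{corner values} $F(0,0)\eqp(C(a),C(b))$, $F(|a^{*}|,0)\eqp(0,C(b\mid a))$, $F(0,|b^{*}|)\eqp(C(a\mid b),0)$, and $F(|a^{*}|,|b^{*}|)\eqp(0,0)$. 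A short planar check (using only $0\le C(a\mid b)\le C(a)$ and $0\le C(b\mid a)\le C(b)$) shows that provided $C(a\mid b)$ and $C(b\mid a)$ are both $\gg\log n$, the target $T$ lies strictly inside the quadrilateral with these four vertices, so the boundary image of $F$ winds around $T$ with non-zero degree. Combined with the $O(\log n)$-continuity, a discrete intermediate-value argument (extend $F$ by bilinear interpolation on unit grid squares and invoke Brouwer degree theory) produces $(i,j)$ with $|F(i,j)-T|=O(\log n)$, and we set $z:=z_{i,j}$.

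The main obstacle is the \emph{degenerate regime} where $\min\{C(a\mid b),C(b\mid a)\}=O(\log n)$: there $T$ lies on the boundary quadrilateral, and the prefix-prefix family can outright miss $T$ (for instance, if $a$ is random and $b=a_{[1:m]}$ with $m\ll C(a)/2$, then $a^{*}_{[1:C(a)/2]}$ already over-determines $b$). In this regime one of $a,b$ is essentially a function of the other, and I would swap in a Muchnik-style candidate family. In the case $C(b\mid a)=O(\log n)$, invoke Muchnik's conditional-description theorem to pick a string $p$ with $|p|=C(a\mid b)$, $C(a\mid b,p)\eqp 0$ and $C(p\mid b)\eqp|p|$ (so $p$ is informationally independent of $b$); then $z_{i,j}:=\langle b^{*}_{[1:i]},\,p_{[1:j]}\rangle$ gives $F(i,j)\eqp(C(a)-i-j,\,C(b)-i)$ almost cleanly, and the explicit choice $i=C(b)/2$, $j=(C(a)-C(b))/2$ hits $T$ directly. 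The symmetric case $C(a\mid b)=O(\log n)$ is analogous. Combining the topological step for the generic regime with this combinatorial construction for the degenerate regime yields the theorem; it is worth remarking that the key geometric input --- containment of $T$ in the image quadrilateral --- is specific to two dimensions, which is morally consistent with the failure of the statement for $k\ge3$ established later in the paper.
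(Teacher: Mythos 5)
The paper's own proof of this theorem is \emph{not} topological: it is the hashing argument in the appendix, which applies Muchnik's conditional-description theorem (Proposition~\ref{p:muchnik} / Corollary~\ref{c:muchnik}) to build $z$ directly as a concatenation $z = pq$ of two hashes, one extracted from $a$ and one from $b$, with lengths chosen so that the chain rule gives the halved conditional complexities. Your degenerate-case fallback is essentially that argument specialized to $C(b\mid a)=O(\log n)$ and is sound. But your generic-case step contains a genuine gap, so the two pieces do not cover all inputs.

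The gap is the inference ``$T$ lies strictly inside the quadrilateral, \emph{so} the boundary image of $F$ winds around $T$.'' The two non-axis sides of the boundary image are not straight chords, and containment in the straight quadrilateral says nothing about the winding number of the actual curves. Along the side $z=b^{*}_{[1:j]}$ all you can guarantee is $C(a\mid z)\ge C(a\mid b)$ together with
\[
C(a\mid z)-C(b\mid z)\;\in\;\bigl[\,C(a\mid b)-C(b\mid a),\ C(a\mid b)\,\bigr]\pm O(\log n),
\]
a band of width $C(b\mid a)$ whose interior the curve may traverse in an uncontrolled way (it depends on how the mutual information is laid out inside the shortest program $b^{*}$, over which you have no control). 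In the asymmetric regime $C(a\mid b)<C(b\mid a)$, both $\gg\log n$, with large $I(a:b)$, both curved sides can cross the horizontal line $y=C(b)/2$ at a point with $x = C(a)/2 - \tfrac12\bigl(C(b\mid a)-C(a\mid b)\bigr)$, i.e.\ strictly to the \emph{left} of $T$ by $\Theta(C(b\mid a)-C(a\mid b))\gg\log n$; then a rightward ray from $T$ meets no side of the boundary at all, the winding number is zero, and the degree argument yields nothing. This really happens: take $a=\langle s,r_a\rangle$ and $b=\langle s,r_b\rangle$ with $s,r_a,r_b$ independent random strings, $|r_a|\ne|r_b|$, and shortest programs that list $s$ first. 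Then for all grid points $(i,j)$ with $i,j$ not exceeding $|s|$ one gets $C(b\mid z_{i,j})-C(a\mid z_{i,j}) \eqp C(b)-C(a)$, which differs from the required $\tfrac12\bigl(C(b)-C(a)\bigr)$ by $\Theta\bigl(C(b\mid a)-C(a\mid b)\bigr)$, and the remaining grid points are even further off. To make the topological route robust you would need to enlarge the candidate family beyond prefixes, e.g.\ by feeding in Muchnik hashes on both axes --- but at that point you are already running the paper's hashing proof, and the topology becomes superfluous.
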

In fact, \cite{shen-romashchenko} proved a tighter and more general  statement:
\begin{theorem} \cite[Theorem~4]{shen-romashchenko}
\label{th:halving-const}
For some constant $\kappa$ the following statement holds: for every two strings $a,b$ of
complexity at most $n$ and for every integers $\alpha,\beta$ such that
\begin{itemize}
\item $\alpha \le C(a) - \kappa \log n$,
\item $\beta \le C(b) - \kappa \log n$,
\item $- C(a \vert  b) + \kappa \log n \le \beta - \alpha \le C(b \vert  a) - \kappa\log n$,
\end{itemize}
there exists a string $z$ such that 
\(
| C(a \vert  z) - \alpha | \le  \kappa\text{ and }| C(b \vert  z) - \beta | \le \kappa.
\)

\end{theorem}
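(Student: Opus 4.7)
The plan is to prove Theorem~\ref{th:halving-const} by a two-dimensional discrete intermediate-value argument, which is the natural generalization of the ``prefix sweep'' used for a single string. First, I would verify that the three hypotheses on $\alpha,\beta$ are \emph{necessary} up to logarithmic terms: $\alpha \le C(a)$ and $\beta \le C(b)$ because conditioning cannot increase complexity, while $\beta - \alpha \le C(b|a)$ follows from $C(b|z) \lep C(b|a,z) + C(a|z) \le C(b|a) + C(a|z)$, and the symmetric bound $\alpha - \beta \le C(a|b)$. So the proof amounts to showing that every integer point of this ``feasibility quadrilateral'' is attained, up to an $O(1)$ slack.

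Next, I would set up a two-parameter family of candidate conditions $z_{i,j}$, indexed by integers $0 \le i \le C(a)$ and $0 \le j \le C(b)$, with two key properties: (a) each unit step in $i$ or $j$ changes both $C(a|z_{i,j})$ and $C(b|z_{i,j})$ by at most $O(1)$, and (b) the four corners $(0,0), (C(a),0), (0,C(b)), (C(a),C(b))$ map under $(C(a|\cdot),C(b|\cdot))$ to the four extreme profiles $(C(a),C(b))$, $(0,C(b|a))$, $(C(a|b),0)$, and $(0,0)$. A concrete realization is to fix optimal programs $p$ for $a$ and $q$ for $b$, and let $z_{i,j} = \langle p_{[1:i]}, q_{[1:j]}, i, j\rangle$. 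Incrementing $i$ by $1$ appends one bit to the condition, which by the Kolmogorov--Levin chain rule forces $C(\cdot|z_{i,j})$ to move by only $O(1)$; the corner values likewise follow from the chain rule.

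Once the grid is in place, the conclusion is a two-dimensional discrete intermediate-value principle: any Lipschitz map from an integer grid to $\mathbb{Z}^2$ whose corners pin down the corners of a quadrilateral must hit every integer point of that quadrilateral up to an $O(1)$ error (one can prove this by a standard degree or monotone-path argument, first sweeping $i$ until $C(a|z_{i,0})$ crosses $\alpha$, then sweeping $j$ until $C(b|z_{i,j})$ crosses $\beta$, and noting that $C(a|z_{i,j})$ drifts by at most the total $j$-sweep length). Applied to $(i,j) \mapsto (C(a|z_{i,j}), C(b|z_{i,j}))$, this yields $(i^{\ast},j^{\ast})$ at which the image is within $O(1)$ of the prescribed $(\alpha,\beta)$.

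The main obstacle I anticipate lies in showing that the image of the grid truly covers the full feasibility quadrilateral, rather than some strictly smaller subregion. Because $p$ and $q$ are correlated, a prefix of $p$ can leak information about $b$ via the mutual information $I(a:b)$, so $(i,j) \mapsto (C(a|z_{i,j}),C(b|z_{i,j}))$ is a piecewise-affine deformation of the unit grid, not a pure translation. The slope conditions $-C(a|b) \le \beta-\alpha \le C(b|a)$ are precisely what force the deformed image to contain the target point, and matching the side lengths of the feasibility quadrilateral against the grid requires a careful case analysis via the chain rule. The $c\log n$ slack in the statement is there to absorb the discretization error, the logarithmic remainders of the chain rule, and the edge effects of the construction.
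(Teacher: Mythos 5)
Your proposal follows essentially the same route as the paper's: a two-parameter family of conditions indexed by prefix lengths, a Lipschitz map from the parameter grid to the plane of conditional complexities, corner values pinning down a quadrilateral, and a topological ``must hit the target'' conclusion. This is precisely what the paper attributes to A.~Shen and sketches in Appendix~A.2 for the special case of Theorem~\ref{th:topological-simple}; there the key lemma is stated as the \emph{drum theorem} (the circle is not a retract of the closed disk), i.e.\ a winding-number/degree argument applied to the image of the boundary of the rectangle.

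The one place where your write-up is not yet a proof is exactly the step where you hedge between a ``degree'' and a ``monotone-path argument.'' The monotone-path version as you describe it --- sweep $i$ until $C(a\,|\,z_{i,0})$ crosses $\alpha$, then sweep $j$ until $C(b\,|\,z_{i,j})$ crosses $\beta$ --- does not work. During the $j$-sweep the quantity $C(a\,|\,z_{i^{\ast},j})$ can drop by up to the entire length of that sweep: the prefix of $q$ you are appending can carry up to $I(a:b)$ bits of information about $a$, and nothing in the construction controls how those bits are distributed along $q$. So the first coordinate is not pinned to $\alpha$ by the time the second coordinate reaches $\beta$; the drift you acknowledge (``at most the total $j$-sweep length'') is not an $O(1)$ or even an $O(\log n)$ error but potentially $\Theta(n)$, and the argument collapses. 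One really does need the two-dimensional degree argument, for which the boundary behavior (images of the four edges) and the slope conditions $-C(a|b)\le\beta-\alpha\le C(b|a)$ must be used to show the image of the boundary loop has non-zero winding number around $(\alpha,\beta)$. Your sketch recognizes this is where the hypotheses enter, but the lemma itself is not made precise, and the incorrect monotone-path alternative should be dropped rather than offered in parallel.

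A small technical remark on the family $z_{i,j}=\langle p_{[1:i]},q_{[1:j]},i,j\rangle$: appending the binary encodings of $i$ and $j$ is redundant (the pairing $\langle p_{[1:i]},q_{[1:j]}\rangle$ already determines both lengths) and, if done naively, can break the $O(1)$-per-step Lipschitz bound at powers of two where the binary expansion of $i$ or $j$ changes by $\Theta(\log n)$ bits. Using just $\langle p_{[1:i]},q_{[1:j]}\rangle$, or prefixes of $a$ and $b$ themselves as in the paper's appendix, avoids this.
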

With $\alpha=\frac12C(a)$ and $\beta = \frac12 C(b)$, this theorem implies the following corollary,
which is (for non-degenerate parameters) a stronger version of Theorem~\ref{th:halving-log}:
\begin{corollary}
\label{c:halving-const}
For some constant $\kappa$ the following statement holds: for every two strings $a,b$ such that 
\(
C(a\vert b) \ge \kappa  \log n  \text{ and } C(b\vert a) \ge \kappa  \log n 
\)
there exists a string $z$ such that 
\[
\left| C(a \vert  z) - \frac12C(a) \right| \le  \kappa \text{ and } \left| C(b \vert  z) - \frac12 C(b) \right| \le \kappa.
\]
\end{corollary}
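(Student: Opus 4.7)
The plan is to obtain Corollary~\ref{c:halving-const} directly from Theorem~\ref{th:halving-const} by plugging in the specific parameters $\alpha = \tfrac{1}{2}C(a)$ and $\beta = \tfrac{1}{2}C(b)$ and checking that, under the hypothesis $C(a|b), C(b|a) \gg \log n$, the three admissibility constraints of the theorem are satisfied. The existence and the constant bound on $|C(a|z)-\tfrac{1}{2}C(a)|$ and $|C(b|z)-\tfrac{1}{2}C(b)|$ then follow immediately.

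First I would verify the two ``ceiling'' constraints $\alpha \le C(a)-c\log n$ and $\beta \le C(b)-c\log n$. With $\alpha=\tfrac{1}{2}C(a)$ these reduce to $C(a) \ge 2c\log n$ and $C(b) \ge 2c\log n$. Since trivially $C(a) \ge C(a|b)$ and $C(b) \ge C(b|a)$, both follow from the assumption $C(a|b), C(b|a) \gg \log n$ for all sufficiently large $n$.

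The main technical step is the ``balance'' constraint
$$-C(a|b) + c\log n \;\le\; \beta - \alpha \;\le\; C(b|a) - c\log n.$$
Here I would use the Kolmogorov--Levin chain rule in the form $C(a,b) \eqp C(a)+C(b|a) \eqp C(b)+C(a|b)$, which gives
$$C(b)-C(a) \;\eqp\; C(b|a)-C(a|b),$$
so that $\beta-\alpha = \tfrac{1}{2}(C(b)-C(a)) \eqp \tfrac{1}{2}C(b|a) - \tfrac{1}{2}C(a|b)$. Substituting, the upper inequality becomes $\tfrac{1}{2}C(b|a) + \tfrac{1}{2}C(a|b) \ge c\log n - O(\log n)$, and the lower inequality becomes its symmetric counterpart; both hold because each of $C(a|b)$ and $C(b|a)$ individually dominates $\log n$.

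The only real obstacle is bookkeeping of the $O(\log n)$ slack: the chain rule, the theorem's margins, and the hypothesis all have $\log n$-order error terms, so I would phrase ``$\gg \log n$'' quantitatively as ``at least $c' \log n$'' for a sufficiently large constant $c'$ depending on the constant $c$ of Theorem~\ref{th:halving-const}, in order to absorb the logarithmic error from the chain rule into the $c\log n$ margins required by the theorem. With this careful choice of constants, the verification becomes routine and Theorem~\ref{th:halving-const} delivers the desired $z$.
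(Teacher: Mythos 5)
Your proposal is correct and takes essentially the same approach as the paper, which simply notes that the corollary follows by setting $\alpha=\frac12 C(a)$, $\beta=\frac12 C(b)$ in Theorem~\ref{th:halving-const}; you have merely filled in the routine verification that the three admissibility constraints hold under $C(a|b),C(b|a)\gg\log n$, using the chain rule to relate $C(b)-C(a)$ to $C(b|a)-C(a|b)$. The only detail left implicit (rounding $\frac12 C(a)$, $\frac12 C(b)$ to integers) is harmless, as it perturbs $\alpha,\beta$ by at most $1$ and is absorbed by the constant $c$.
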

The proof of Theorem~\ref{th:halving-const}, due to A.~Shen, employs topological arguments. 
In outline, the construction of \( z \) proceeds by concatenating two parts: one derived from \( a \) and the other from \( b \). 
The principal difficulty is to choose the appropriate  sizes of these two components. 
It turns out that suitable proportions can indeed be chosen, and this fact follows from a well-known result in topology stated below.

\begin{proposition}
\label{p:topology}
Let \( D \) denote the two-dimensional disk with boundary circle \( S \). 
Suppose \( f : D \to \mathbb{R}^2 \) is a continuous map such that \( f(S) \subseteq S \). 
If the restriction \( f|_S \) has nonzero degree, then the image \( f(D) \) contains the entire disk \( D \); that is, every point of \( D \) has at least one preimage under \( f \).
\end{proposition}

\begin{remark}
A corresponding statement holds in higher dimensions. 
Let \( n \ge 1 \) be an integer, and let \( D^{n+1} \) denote the \((n+1)\)-dimensional disk with boundary sphere \( S^n \). 
Suppose \( f : D^{n+1} \to \mathbb{R}^{n+1} \) is a continuous map such that \( f(S^n) \subseteq S^n \). 
If the restriction \( f|_{S^n} \) has nonzero degree, then the image \( f(D^{n+1}) \) contains the entire disk \( D^{n+1} \).
\end{remark}

Proposition~\ref{p:topology} and its higher-dimensional generalization can be proven using the techniques presented, for example, in Chapter~26 of the textbook \cite{postnikov}. 
This argument also implies the classical topological result that a closed disk cannot be retracted onto its boundary circle (the so-called \emph{drum theorem}). 
Further discussion can be found in \cite{shen-romashchenko} and in the Appendix.

\section{Typical incidences  in a projective plane}
\label{sec:technical}

In  what follows we discuss typical pairs $(\text{line}, \text{point})$ in a finite projective plane and 
 their information-theoretic properties. 
The framework discussed in this section helps to translate information-theoretic questions in the combinatorial language.

\begin{definition}
\label{def-graph-parameters}
Let $G=(L,R;E)$ with $E\subset L\times R$ be a simple non-directed bipartite graph. This graph is bi-regular if all vertices in $L$ have the same degree (the same number of neighbors in $R$) 
and all vertices in $R$ have the same degree (the same number of neighbors in $L$).

To specify the quantitative characteristics of $G$ we will use a triple of parameters $(\alpha,\beta,\gamma)$ such that 
\[
|L| = 2^{\alpha}, \
|R| = 2^{\beta }, \
|E| = 2^{\gamma}. \
\]
If $G$ is bi-regular, then the degrees of vertices in $L$ are equal to $|E| / |L| = 2^{\gamma-\alpha}$ and the   degrees of vertices in $R$ are equal to 
$|E|/|R| = 2^{\gamma-\beta}$.
 \end{definition}
\begin{proposition}
\label{p:standard-profile}
Let $G = (L,R;E)$ be a bi-regular with parameters $(\alpha,\beta,\gamma)$, as defined above.
If the graph is given explicitly (the complete list of  vertices and edges of the graph can be found algorithmically given the value of the parameters $n$),
then for the vast majority (for instance, for $99\%$) of pairs $(x, y) \in E$ we have
\begin{equation}
\label{eq:profile}
C(x) \eqp \alpha,\ C(y) \eqp \beta,\ \text{and}\ C(x,y) \eqp  \gamma.
\end{equation}
\end{proposition}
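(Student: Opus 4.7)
The plan is to split the claim into upper and lower bounds and handle them separately. Since $G$ is given explicitly, I fix a universal program that, given the parameters $(\alpha,\beta,\gamma)$ encoded in $O(\log(\alpha+\beta+\gamma))$ bits, enumerates $L$, $R$ and $E$ in some canonical order. The three upper bounds are then immediate: any $x\in L$ is specified by its index in the enumeration of $L$, costing at most $\log|L|=\alpha+\gamma$ bits; similarly for $y\in R$ and for an edge $(x,y)\in E$. After adding the logarithmic overhead for the parameters, this gives $C(x)\lep \alpha+\gamma$, $C(y)\lep \beta+\gamma$ and $C(x,y)\lep \alpha+\beta+\gamma$, uniformly in $(x,y)\in E$ (not just for typical pairs).

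For the lower bounds I would use the standard counting argument, combined with bi-regularity to pass from a count of vertices to a count of edges. Fix an integer $k$, to be chosen later as a large constant. There are fewer than $2^{\alpha+\gamma-k}$ binary strings of Kolmogorov complexity below $\alpha+\gamma-k$, hence at most that many $x\in L$ with $C(x)<\alpha+\gamma-k$. By bi-regularity each such $x$ is incident to exactly $2^{\beta}$ edges, so the set of edges $(x,y)\in E$ whose left endpoint has low complexity has size at most $2^{\alpha+\gamma-k}\cdot 2^{\beta}=2^{\alpha+\beta+\gamma-k}$, i.e.\ at most a $2^{-k}$ fraction of $|E|$. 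A symmetric argument, using that each $y\in R$ has degree $2^{\alpha}$, gives the same $2^{-k}$ bound for edges with $C(y)<\beta+\gamma-k$. The estimate $C(x,y)<\alpha+\beta+\gamma-k$ is violated by at most $2^{\alpha+\beta+\gamma-k}$ edges directly, without needing bi-regularity, since it bounds the number of pairs by the number of short descriptions.

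A union bound then shows that the fraction of edges $(x,y)\in E$ for which at least one of
\[
C(x)\ge \alpha+\gamma-k,\quad C(y)\ge \beta+\gamma-k,\quad C(x,y)\ge \alpha+\beta+\gamma-k
\]
fails is at most $3\cdot 2^{-k}$; choosing $k$ a sufficiently large absolute constant (for instance $k=9$) brings this below $1\%$, which establishes the proposition for the $99\%$ threshold, and of course any other constant fraction can be obtained analogously. The logarithmic slack in $\eqp$ and in the error term $O(\log n)$ absorbs both the cost of encoding the parameters on the upper-bound side and the constant $k$ on the lower-bound side. I do not foresee a real obstacle: the only delicate point is the passage from ``most vertices have high complexity'' to ``most edges have high complexity,'' which is exactly what bi-regularity is designed to provide, since without it a few low-complexity vertices could carry a disproportionate share of the edges.
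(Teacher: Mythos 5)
Your proof is correct and is precisely the standard counting argument the paper invokes (the paper's ``proof'' is just a citation to \cite{suv}). You correctly identify the one subtlety—using bi-regularity to convert ``few low-complexity vertices'' into ``few edges with a low-complexity endpoint''—and the union bound over the three events with a large enough constant $k$ cleanly yields the $99\%$ threshold.
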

\begin{proof}
This proposition follows from a standard counting, see e.g. \cite{suv}.
\end{proof}

\begin{definition}
\label{def-typical}
For a graph $G=(L,R;E)$ with parameters $(\alpha,\beta,\gamma)$ we  say that an edge $(u,v)\in E$ is \emph{typical} if it satisfies \eqref{eq:profile}.
\end{definition}
\begin{proposition}\label{p:complexity2subgraph}
Let $G = (L, R; E)$ be an explicitly given bi-regular bipartite graph with parameters $(\alpha, \beta, \gamma)$, as in Definition~\ref{def-graph-parameters}.
Let $(x, y) \in E$ be a typical edge in this graph, as in Definition~\ref{def-typical}. 
And let $z$ be a string satisfying:
\[
C(x \mid z) \le \alpha',\quad
C(y \mid z) \le \beta',\quad
C(x, y \mid z) \ge \gamma',
\]
for some integers $(\alpha', \beta', \gamma')$ with $\alpha' \le \alpha$, $\beta' \le \beta$, and $\gamma' \le \gamma$.
Then there exists an induced subgraph $H = (L', R'; E')$ of $G$,
\[
L' \subset L,\quad R' \subset R,\quad E' = (L' \times R') \cap E,
\]
such that
\(
|L'| = 2^{\alpha' \pm O(\log n)},\quad 
|R'| = 2^{\beta' \pm O(\log n)},\quad 
|E'| \ge 2^{\gamma' - O(\log n)}.
\)
\end{proposition}
\begin{proof}[Sketch of the proof]
We let
\[
L'= \{ x'\in L\ :\ C(x'\vert z) \le \alpha'\},
\
R' =  \{ y'\in R\ :\ C(y'\vert z) \le \beta' \}.
\]
Observe that $x\in L'$ and $y\in R'$.
\begin{lemma}
\label{l:complexity2combinatorics}
\(
|L'| = 2^{\alpha' \pm O(\log n)},\ 
|R'| = 2^{\beta'\pm  O(\log n)}.
\)
\end{lemma}
\begin{proof}[Proof of lemma]
This lemma is a standard translation between the combinatorial and the information-theoretic languages. 
 The upper bound for $|L'|$ follows from the fact that each element  of $L'$ is obtained from $z$ by a program of length at most 
 $\alpha'$. 
 The lower bound follows from the observation that  $L'$  contains, among other elements,  the $ 2^{\alpha' - O(\log n)}$
smallest elements of $L$ in lexicographic order. 
The argument for $R'$ is similar.
 A more detailed proof can be found, e.g., in  \cite[lemma~1 and lemma~2]{geoffroy-rustam}.
\end{proof}

It remains to prove a bound on the cardinality of $E'$. 
Given a string $z$, we can run in parallel all programs of length $\alpha'$ and $\beta'$ on input $z$ and enumerate the results that they produce. These results will provide us with the lists of elements $L'$ and $R'$ 
revealing step by step. Accordingly, we can enumerate edges of $E'$. Every pair $(x',y') \in E'$ can be specified by 
(i) the binary expansion of the numbers $\alpha',\beta'$ and (ii) by the ordinal number of $(x',y')$ in the enumeration of $E'$.
This argument applies in particular to the pair $(x,y)$, which belongs to $E'$.
 Therefore,
\(
C(x,y\vert z) \le \log |E'| + O(\log n).
\)
Reading this inequality from the right to the left, we obtain 
\[
|E'| \ge 2^{C(x,y |  z) - O(\log n)} = 2^{\gamma'-O(\log n)},
\] 
and we are done.
\end{proof}

\section{Graphs with highly non-extractible mutual information}
\label{sec:no-halving}

\subsection{A non-extractability result via random graphs}
\label{sec:no-halving-raz}
In this section, we revisit a result on the non-extractability of mutual information proven in~\cite{razenshteyn} with the technique of random graphs, 
and observe that it yields a negative answer to Question~\ref{q:2}  
(for \( k = 3 \) and \( \lambda = 1/2 \)).

\begin{theorem}[A special case of Theorem 9 in~\cite{razenshteyn}]
\label{th:raz}
For all sufficiently large \( n \), there exists a bipartite graph \( G = (L, R; E) \) with parameters \( (2n, 2n, 3n) \)  
such that for all subsets \( L' \subset L \) and \( R' \subset R \) of size \( |L'| = |R'| = 2^{n + o(n)} \),  
the number of edges in \( E \cap (L' \times R') \) is less than \( 2^{n + o(n)} \).
\end{theorem}

Applying Proposition~\ref{p:complexity2subgraph}, we obtain the following corollary.

\begin{corollary}
\label{cor:rasen}
For every \( \epsilon > 0 \) and all sufficiently large \( n \), there exists a pair \( (x, y) \) such that
\[
C(x) \eqp 2n, \quad C(y) \eqp 2n, \quad C(x, y) \eqp 3n,
\]
and for all strings \( z \) such that \( C(x \vert z) < (1 + \epsilon)n \) and \( C(y \vert z) < (1 + \epsilon)n \), we have
\begin{equation}
\label{eq:optimal-bound}
C(x, y \vert z) \lep (1 + O(\epsilon))n.
\end{equation}
\end{corollary}
This results gives an answer to Question~\ref{q:2}:
\begin{corollary}
\label{th:shen's-question-1}
For every \( n \), there exists a triple of strings \( (a, b, c) \), each of complexity \( \Theta(n) \), such that there is no string \( z \) satisfying
\[
\begin{array}{rcl}
C(a \vert z) &=& \frac{1}{2} C(a) + O(\log n), \\
C(b \vert z) &=& \frac{1}{2} C(b) + O(\log n), \\
C(c \vert z) &=& \frac{1}{2} C(c) + O(\log n).
\end{array}
\]
More precisely, for all \( z \) such that \( C(a \vert z) \eqp \frac{1}{2} C(a) \) and \( C(b \vert z) \eqp \frac{1}{2} C(b) \), we have
\[
C(c \vert z) \le n + o(n) \ll 1.5n \eqp \frac{1}{2} C(c).
\]
\end{corollary}

\begin{proof}
Fix an integer \( n \), and let \( (x, y) \) be the pair of strings from Corollary~\ref{cor:rasen}. We define
\[
a := x, \quad b := y, \quad c := \langle x, y \rangle
\]
and apply \eqref{eq:optimal-bound}.
\end{proof}

\begin{remark}
The proof of Theorem 9 in~\cite{razenshteyn} is non-constructive: the existence of the required graph \( G = (L, R; E) \) is established via the probabilistic method. 
Such a graph could, in principle, be found by brute-force search, but this would require doubly exponential time in \( n \). 
Indeed, one must enumerate all candidate bipartite graphs with \( 2^{2n} + 2^{2n} \) vertices and \( 2^{3n} \) edges, and for each graph, check all induced subgraphs with \( 2^n + 2^n \) vertices.

In what follows, we provide a much more explicit construction of a pair \( (x, y) \) with similar properties. 
These pairs will correspond to incidences in a projective plane over a prime field. 
We believe that such explicit examples of non-materializable mutual information may be more useful for applications, e.g., in communication complexity or coding theory 
(the first steps in this direction are presented in Section~\ref{sec:ska}).
However, the bounds we can prove for these explicit constructions are significantly weaker than~\eqref{eq:optimal-bound},
\end{remark}

\subsection{Typical incidences in a projective plane}

Now we instantiate the framework discussed above and discuss the central construction of this paper --- typical incidences in a finite projective plane.

 \begin{notation_nonumber}\label{e:projective-plane}
Let $\mathbb{F}$ be a finite field and $\mathbb{FP}$ be the projective plane over this field.
Let $L$ be the set of points and $R$ be the set of lines in this plane. A pair $(x,y)\in L\times R$ is connected by an edge iff the chosen  point $x$ lies in in the chosen line $y$.
Hereafter we denote this graph by $G^{\mathrm{PL}}_{\mathbb{F}}$.
\end{notation_nonumber}
We proceed with a discussion of properties of $(x,y)$ from $G^{\mathrm{PL}}_{\mathbb{F}}$ that differ depending on whether $\mathbb{F}$ possesses a proper subfield.

\begin{theorem}[see \cite{cmrsv}]
\label{th:cmrsv}
Let $\mathbb{F}$ be a field with a subfield of size $\sqrt{|\mathbb{F}|}$.
Then for a typical edge $(x,y)$ of  $G^{\mathrm{PL}}_{\mathbb{F}}$
(i.e., a typical incident pair $(\text{line},\text{point})$ on the plane $\mathbb{FP}$) we have 
\[
C(x) \eqp 2n ,\ 
C(y) \eqp 2n ,\ 
C(x,y) \eqp 3n ,
\]
and there exists a $z$ such that 
\[
C(x\vert z) \eqp n ,\ 
C(y\vert z) \eqp n ,\ 
C(x,y\vert z) \eqp 1.5n
\]
or, equivalently
\[
C(x\vert y,z) \eqp 0.5n ,\ 
C(y\vert x,z) \eqp 0.5n ,\ 
I(x:y\vert z) \eqp 0.5n,
\]
for $n=\log \lceil |\mathbb{F|}\rceil$, 
as shown in the diagram in Fig.~\ref{fig:lemma-halving-example}.
\end{theorem}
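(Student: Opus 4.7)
The plan is to take $z$ to be a code of a $\mathbb{K}$-subplane $\Sigma\subset\mathbb{FP}$ that contains the pair $(x,y)$ and is jointly typical with $(x,y)$ in the Kolmogorov sense. The subfield hypothesis provides the required supply of subplanes: $\mathbb{FP}$ contains $\Theta(|\mathbb{F}|^8/|\mathbb{K}|^8)=\Theta(2^{4n})$ distinct $\mathbb{K}$-subplanes (the orbits of $PGL_3(\mathbb{K})\hookrightarrow PGL_3(\mathbb{F})$), each having $|\mathbb{K}|^2+|\mathbb{K}|+1=\Theta(2^n)$ points and lines and $\Theta(2^{1.5n})$ incidences, so that each is, as an incidence graph, bi-regular with parameters $(\alpha,\beta,\gamma)=(0.5n,0.5n,0.5n)$ in the sense of Proposition~\ref{p:standard-profile}. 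A further orbit--stabiliser count shows that exactly $\Theta(2^{2.5n})$ of these subplanes pass through any prescribed incident pair $(x,y)\in G^{\mathrm{PL}}_\mathbb{F}$. From this family I would pick a $\Sigma$ whose index (in a canonical enumeration of the $2^{2.5n}$ subplanes through $(x,y)$) is incompressible given $(x,y)$, so that $C(\Sigma\mid x,y)\eqp 2.5n$, and whose unconditional code is also incompressible, so that $C(\Sigma)\eqp 4n$. The symmetry of information then forces
\[
C(x,y\mid \Sigma)\eqp C(x,y)+C(\Sigma\mid x,y)-C(\Sigma)\eqp 3n+2.5n-4n = 1.5n.
\]

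Set $z$ to be a binary code of this $\Sigma$. The upper bounds $C(x\mid z),\ C(y\mid z)\lep n$ and $C(x,y\mid z)\lep 1.5n$ follow at once from the enumeration argument inside the proof of Lemma~\ref{l:complexity2combinatorics}, applied to the $\Theta(2^n)$ points, $\Theta(2^n)$ lines, and $\Theta(2^{1.5n})$ incidences of $\Sigma$ that are algorithmically accessible from its code. For the matching lower bounds I would additionally use the bound $C(y\mid x,z)\lep 0.5n$, obtained by enumerating the only $|\mathbb{K}|+1=\Theta(2^{0.5n})$ lines of $\Sigma$ passing through $x$, and then combine it with the chain rule $C(x,y\mid z)\eqp C(x\mid z)+C(y\mid x,z)$ and the identity $C(x,y\mid z)\eqp 1.5n$ from the previous paragraph to conclude $C(x\mid z)\gep n$; the symmetric argument yields $C(y\mid z)\gep n$. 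The equivalent formulation in terms of $C(x\mid y,z),\ C(y\mid x,z),\ I(x:y\mid z)$ is then a further application of the chain rule.

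The main technical obstacle is the existence claim in the first paragraph: for the typical pair $(x,y)$ one must exhibit a subplane $\Sigma$ through $(x,y)$ satisfying \emph{both} $C(\Sigma\mid x,y)\eqp 2.5n$ \emph{and} $C(\Sigma)\eqp 4n$. A tempting deterministic rule such as ``$\Sigma=$ lex-smallest subplane through $(x,y)$'' gives $C(\Sigma\mid x,y)=O(\log n)$ and can push $C(x,y\mid\Sigma)$ far below $1.5n$, ruining the identity needed for the lower bounds. The compatibility of the two incompressibility requirements is secured by a direct double count: the number of subplanes with $C(\Sigma)<4n-c\log n$ is at most $2^{4n-c\log n}$, so the number of (incidence, such low-complexity subplane) pairs is at most $2^{5.5n-c\log n}$; only a $2^{-c'\log n}$-fraction of incidences $(x,y)$ can therefore have a non-negligible proportion of their $2^{2.5n}$ subplanes being atypical, and the typical $(x,y)$ of the theorem lies in the complementary generic set, on which the two incompressibility conditions are satisfied simultaneously.
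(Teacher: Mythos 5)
Your proof is correct, and it is built on the same underlying combinatorial object as the paper's sketch---Baer subplanes of $\mathbb{FP}$ over the size-$\sqrt{|\mathbb{F}|}$ subfield---but it selects the witness $z$ differently, and it is worth noting the distinction. The paper's sketch constructs a \emph{small} covering family of $2^{1.5n+O(\log n)}$ induced subgraphs of $G^{\mathrm{PL}}_{\mathbb{F}}$ with each edge covered only $\mathrm{poly}(n)$ times; picking the one through $(x,y)$ then yields a $z$ with $C(z)\eqp 1.5n$ and $C(z\mid x,y)=O(\log n)$, after which all the claimed conditional complexities follow by the same enumeration and chain-rule bookkeeping you use. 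You instead work with the \emph{full} family of $\Theta(2^{4n})$ Baer subplanes and choose one through $(x,y)$ of maximal conditional complexity, so $C(z\mid x,y)\eqp 2.5n$ and $C(z)\eqp 4n$. Both choices realize the profile stated in the theorem (which constrains neither $C(z)$ nor $C(z\mid x,y)$), and your route has the advantage of being self-contained: it avoids having to establish the existence of a sparse, low-multiplicity covering subfamily (which the paper delegates to \cite{cmrsv}) and replaces it by a transparent orbit/flag double count plus a ``take the incompressible witness'' step that is standard in AIT. One small simplification you could make: the closing double-counting argument for the joint attainability of $C(\Sigma\mid x,y)\eqp 2.5n$ and $C(\Sigma)\eqp 4n$ is redundant. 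Once you take the $\Sigma$ through $(x,y)$ of maximal conditional complexity, you already have $C(\Sigma\mid x,y)\eqp 2.5n$, hence $C(x,y,\Sigma)\eqp 5.5n$; combining this with the trivial bounds $C(x,y\mid\Sigma)\lep 1.5n$ and $C(\Sigma)\lep 4n$ via the chain rule $C(x,y,\Sigma)\eqp C(\Sigma)+C(x,y\mid\Sigma)\lep 5.5n$ forces both inequalities to be equalities, so $C(\Sigma)\eqp 4n$ is automatic and no separate genericity argument over pairs $(x,y)$ is needed.
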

\begin{figure}[h]
\centering
				\begin{tikzpicture}[scale=0.5]
				  \draw \firstcircle node[above left] {\small $n/2$};
				  \draw \secondcircle node [above right] {\small $n/2$};
				  \draw \thirdcircle node [below] {\small \ldots};
				  \node at (95:0.05)   {\small $n/2$};
				  \node at (90:1.55) {\small $n/2$};
				  \node at (210:1.75) {\small $n/2$};
				  \node at (330:1.75) {\small $n/2$};
				  \node at (150:4.75) {\Large $x$};
				  \node at (30:4.75) {\Large $y$};
				  \node at (270:4.75) {\Large $z$};
				\end{tikzpicture}
\caption{Complexity profile for $(x,y,z)$ from Theorem~\ref{th:cmrsv}.}
\label{fig:lemma-halving-example}
\end{figure}
\begin{proof}[Sketch of the proof]
In what follows we sketch the scheme of the proof from  \cite{cmrsv}.
The first claim of the theorem, concerning the values of Kolmogorov complexities of $x$ and $y$, follows directly from Proposition~\ref{p:standard-profile} together with the fact that $(x,y)$ is typical.

The second claim, concerning $z$ and  the conditional Kolmogorov complexities of $x$ and $y$ given $z$, is more delicate and requires an appropriate construction. 
It can be reduced to the following combinatorial statement: the graph $G^{\mathrm{PL}}_{\mathbb{F}}$ can be covered by a relatively small family of induced subgraphs
\[
H_i = (L_i, R_i; E_i), \qquad 
|L_i| = |R_i| = 2^{n}, \quad |E_i| = 2^{1.5n},
\]
where in total $2^{1.5n + O(\log n)}$ such subgraphs are sufficient. 
Given a pair $(x,y)$, we select the subgraph $H_i$ that contains the edge $(x,y)$ and define $z$ as the index $i$ of this subgraph.
(If  $(x,y)$ is covered by more than one $H_i$, we may take any of these subgraphs.)

This covering property follows from two structural features of the graph of the projective plane  $G^{\mathrm{PL}}_{\mathbb{F}}$:  
(i)~the graph is edge-transitive, and  
(ii)~the field $\mathbb{F}$ contains a subfield $\mathbb{G}$ of cardinality $\sqrt{|\mathbb{F}|}$.

We first define a base subgraph $H_0$ as follows. 
Consider the subgraph induced by all vertices (points and lines in the projective plane) that can be represented by triples of field elements  taken from the subfield $\mathbb{G}$. 
Every other subgraph $H_i$ is then obtained from $H_0$ by the action of an appropriately chosen automorphism of the projective plane.

The automorphism group acts transitively on the set of incidences, that is, on all incident pairs $(\text{line}, \text{point})$. 
Therefore, when we apply a uniformly random automorphism to $H_0$, the probability that a fixed incidence $(\text{line}, \text{point})$ belongs to the image is
\[
\frac{\text{number of incidences in } H_0}{\text{number of all incidences in } G^{\mathrm{PL}}_{\mathbb{F}}}
=
\frac{\text{number of incidences in the plane over } \mathbb{G}}{\text{number of incidences in the plane over } \mathbb{F}} 
=
\frac{\Theta((\sqrt{q})^3)}{\Theta(q^3)}.
\]
Hence, if we  sample independently 
\[
\Theta\!\left((\sqrt{q})^3 \log q\right)
= 2^{(3/2)n + O(\log n)}
\]
automorphisms, then with high probability every incidence $(\text{line}, \text{point})$ in $G^{\mathrm{PL}}_{\mathbb{F}}$ is covered by at least one of the resulting subgraphs~$H_i$. 
We fix one such covering family $\{H_i\}$, thus completing the construction.

A more explicit method for constructing a covering family of subgraphs $H_i$ can be found in~\cite[Theorem~9]{cmrsv}.
See also Remark~\ref{remark-profile} after the proof of Theorem~\ref{th:ska-positive}.
\end{proof} 
Theorem~\ref{th:cmrsv} contrasts with Theorem~\ref{th:prime-field}.

\begin{theorem}\label{th:prime-field}
Let $\epsilon\ge 0$ be a small enough real number and 
 $\mathbb{F}$ be a field of a prime cardinality $p$,  and  $n := \lceil \log p \rceil$. Than for a typical edge $(x,y)$ of  $G^{\mathrm{PL}}_{\mathbb{F}}$
(i.e., a typical incident pair $(\text{line},\text{point})$ on the plane $\mathbb{FP}$) we have 
\[
C(x) \eqp 2n,\ 
C(y) \eqp 2n,\ 
C(x,y) \eqp 3n,
\]
and for every $z$ such that 
\begin{equation}
\label{eq:x|z_and_y|z}
C(x\vert z) \lep  (1+\epsilon)n,\ 
C(y\vert z) \lep  (1+\epsilon)n
\end{equation}
we have $C(x,y\vert z) \lep  
 (3/2-1/30+2\epsilon)n \ll 3n/2$.
\end{theorem}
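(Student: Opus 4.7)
The first part of the conclusion --- the unconditional Kolmogorov complexities $C(x)$, $C(y)$, $C(x,y)$ --- follows directly from Proposition~\ref{p:standard-profile} applied to the incidence graph $G^{\mathrm{PL}}_{\mathbb{F}}$, which is bi-regular with parameters $(\alpha,\beta,\gamma)=(n,n,n)$: there are $\Theta(2^{2n})$ points, $\Theta(2^{2n})$ lines, and $\Theta(2^{3n})$ incidences in $\mathbb{FP}$, so any typical edge automatically has the stated profile.

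For the lower bound on $I(x:y|z)$, the plan is a direct translation into combinatorics. Set $\gamma':=I(x:y|z)$; the chain rule together with the hypothesis rewrites as $C(x|z)\eqp C(y|z)\eqp(0.5-\epsilon)n+\gamma'$ and $C(x,y|z)\eqp(1-2\epsilon)n+\gamma'$. Feeding these values into Proposition~\ref{p:complexity2subgraph} (with the roles of $\alpha$, $\beta$, $\gamma$ there played by $(0.5-\epsilon)n$, $(0.5-\epsilon)n$, $\gamma'$) produces an induced subgraph $H=(L',R';E')$ of $G^{\mathrm{PL}}_{\mathbb{F}}$ in which $L'$ is a set of points in $\mathbb{FP}$, $R'$ a set of lines, $|L'|\eqp|R'|\eqp 2^{(0.5-\epsilon)n+\gamma'}$, and the number of incidences between them satisfies $|E'|\geq 2^{(1-2\epsilon)n+\gamma'-O(\log n)}$.

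The combinatorial input is then the Stevens--de Zeeuw incidence bound over a prime field, $|E'|\lesssim(|L'||R'|)^{11/15}+O(|L'|+|R'|)$. For $\epsilon<1/2$ the linear term is absorbed by the lower bound on $|E'|$, so taking base-$2$ logarithms yields $(1-2\epsilon)n+\gamma'\leq\tfrac{22}{15}\bigl((0.5-\epsilon)n+\gamma'\bigr)+O(\log n)$, and elementary rearrangement gives $\gamma'\geq\tfrac{4}{7}(1-2\epsilon)n-O(\log n)=\tfrac{4}{7}n-O(\epsilon n)-O(\log n)$. The $\tfrac{4}{7}$ in the statement is precisely the constant forced by the $\tfrac{11}{15}$ exponent of Stevens--de Zeeuw; the crude Szemer\'edi-style bound $I(P,L)\lesssim|P|^{1/2}|L|+|L|$, which is available in any field, would only yield $\gamma'\geq(0.5-\epsilon)n$, so the prime-field hypothesis is essential exactly here.

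The main obstacle I expect is verifying the admissible range of the Stevens--de Zeeuw bound: the theorem is stated under a polynomial size constraint of the form $|L'|^{7/8}|R'|^{11/8}\lesssim p^{15/8}$, which translates into an upper bound on the admissible $\gamma'$. If this constraint is violated, $\gamma'$ is already in (or near) the regime where the conclusion holds automatically; in the intermediate regime the incidence bound is applied instead to carefully chosen random sub-samples of $(L',R')$ lying inside the admissible window, and the loss from the sub-sampling is what ultimately forces $\epsilon$ to be ``small enough'' in the statement. Tracking the interplay between $\epsilon$, the admissible-range threshold, and the target $\tfrac{4}{7}$, and making sure no regime is left uncovered, is the part of the proof that will require the most bookkeeping.
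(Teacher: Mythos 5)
Your approach is exactly the paper's: translate the conditional complexities to a dense induced subgraph $(L',R';E')$ of the point-line incidence graph via Proposition~\ref{p:complexity2subgraph}, then invoke the Stevens--De~Zeeuw incidence bound for prime fields and read off the inequality on $\gamma'=I(x:y|z)$. The numerics ($\frac{22}{15}$ exponent $\Rightarrow$ $\frac{4}{7}$ threshold) also match.

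Your concern about the admissible range of the Stevens--De~Zeeuw theorem is legitimate --- the paper states the constraints ($|L'|^{7/8}<|R'|<|L'|^{8/7}$ and $\max\{|L'|,|R'|\}\le |\mathbb{F}|^{8/7}$) but applies the bound without explicitly verifying them. However, your proposed resolution (random sub-sampling inside an ``admissible window'') is unnecessary, and the premise behind it is off: there is no intermediate regime. In the setting at hand $|L'|\eqp|R'|\eqp 2^{(0.5-\epsilon)n+\gamma'}$, so the balance constraint is automatic, and the size constraint reads $(0.5-\epsilon)n+\gamma'\le \frac{8}{7}n+O(\log n)$. If this fails, then $\gamma'\ge(\frac{8}{7}-\frac{1}{2})n-O(\epsilon n)-O(\log n)=\frac{9}{14}n-O(\epsilon n)-O(\log n)>\frac{4}{7}n-O(\epsilon n)-O(\log n)$ already, so the conclusion holds trivially; if it holds, the incidence bound applies directly. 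This dichotomy is the missing bookkeeping, and it is a one-line case split, not a sub-sampling argument. Also note that your quoted form of the size constraint ($|L'|^{7/8}|R'|^{11/8}\lesssim p^{15/8}$) does not agree with the constraint stated in the paper; had your version been correct, it would translate (in the balanced case) to $\gamma'\le(\frac{1}{3}+\epsilon)n$, which would leave the interval $(\frac{1}{3}n,\frac{4}{7}n)$ uncovered and would genuinely require the extra work you anticipate --- so the precise form of the admissibility condition matters, and you should cite it exactly rather than from memory.
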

\begin{proof}
Again, the first claim of the theorem (the values of unconditional Kolmogorov complexity) follows from Proposition~\ref{p:standard-profile} and from typicality of $(x,y)$.
We proceed with the second claim.
From Proposition~\ref{p:complexity2subgraph} it follows that in  $G^{\mathrm{PL}}_{\mathbb{F}}$
there is a subgraph  $G' = (L',R',E')$ such that 
\begin{equation}
\label{eq:2}
\begin{array}{rcl}
|L'| &=&  2^{(1+\epsilon)n+O(\log n)}, \\
|R'| &=&  2^{(1+\epsilon)n+O(\log n)},
\end{array}
\end{equation}
and 
\begin{equation}
\label{eq:1}
|E'| \ge 2^{C(x,y | z) - O(\log n) }.
\end{equation}
If $\epsilon<1/7$, then the cardinalities of $L'$ and $R'$ are less than $ |\mathbb{F}|^{8/7}$. 
It was shown in  \cite{tao_improved}  that for every subgraph $G'$ in  $G^{\mathrm{PL}}_{\mathbb{F}}$ for a prime $\mathbb{F}$ 
satisfying the constraints
\[
|L'|^{7/8} < |R'| < |L'|^{8/7} \text{ and } \max\{|L'|, |R'| \} \le |\mathbb{F}|^{8/7}  
\]
we have 
\[
|E'| \le (|L'| \cdot |R'|)^{11/15}.
\]
We plug in this inequality \eqref{eq:2} and \eqref{eq:1} and obtain 
\[
C(x,y\vert z) \lep \frac{22}{15}(1+ \epsilon) n   \le  (3/2-1/30+2\epsilon)n \ll 3n/2 , 
\]
provided that $\epsilon$ is small enough.
\end{proof}

Now we can use  prove a more constructive version of 
Corollary~\ref{th:shen's-question-1}.
\begin{corollary}
\label{th:shen's-question}
For every $n$ there exists a triple of strings $(a,b,c)$, each one of complexity $\Theta(n)$,  such that there is no $z$ satisfying 
\[
\begin{array}{rcl}
C(a\vert z) &=& \frac12 C(a) + O(\log n),\\
C(b\vert z) &=& \frac12 C(b) + O(\log n),\\
C(c\vert z) &=& \frac12 C(c) + O(\log n).
\end{array}
\] 
More precisely, for all $z$ such that $C(a\vert z) \eqp \frac12 C(a)$ and  $C(b\vert z) \eqp \frac12 C(b)$, we have  
\[
C(c\vert z) \le \frac{22}{45} C(c) + O(\log n )\ll \frac12 C(c).
\] 
Moreover, a suitable triple $(a,b,c)$ can be sampled from an explicitly given set tuples.
\end{corollary}
\begin{proof}
We fix an integer $n$  and the minimal prime number $p$ such that $2^n<p<2^{n+1}$,
and 
let $(x,y)$ be a typical edge in $G^{\mathrm{PL}}_{\mathbb{F}_p}$,  as in Theorem~\ref{th:prime-field}.
Then we define 
\(
a:= x,\ 
b:= y,\
c: = \langle x,y\rangle
\)
and apply Theorem~\ref{th:prime-field}.
\end{proof}

\section{Secret key agreement}
\label{sec:ska}

In this section we study communication complexity of the protocol of unconditional (in\-for\-mation-theoretic) secret key agreement.
Let us recall  the settings of the unconditional \emph{secret key agreement}.
We deal with two parties,  Alice and Bob. 
Alice and Bob receive input data, $x$ and $y$ respectively.
It is assumed that  the mutual information between $x$ and $y$ is non-negligible, and its value is known to Alice and Bob, as well as to the adversary.
Further, Alice and Bob exchange messages over a public channel and obtain (on both sides) some string $w$ that must be incompressible (i.e., $C(w)$ is close to its length)
and must have negligible mutual information with the transcript of the protocol, i.e.,
\[
C(w\vert \text{concatenation of all messages sent by Alice and Bob}) \approx |w|.
\]
Thus, Alice and Bob use  the mutual information between $x$ and $y$ to  produce a common secret key $w$
using a communication via a non-protected channel. 
The protocol succeed if Alice and Bob obtain one and the same $w$, and 
an eavesdropper gets only negligible information about this key, even having intercepted all messages sent to each other by Alice and Bob. 
In this paper we assume that the communication protocols are deterministic.
All arguments easily extends to randomized communication protocols with a public\footnote{%
The case of private sources of randomness is a more complex setting. We leave the consideration of this type of protocols for further research.%
} 
source of randomness
(accessible to Alice, Bob, and the eavesdropper).
A more detailed discussion of the settings of secret key agreement problem in the framework of AIT can be found in \cite{andrei-marius,emirhan}.

The optimal size of the secret key is known to be equal to the mutual information between $x$ and $y$, and 
 communication complexity of the protocol is at most \eqref{eq:cc}, see \cite{andrei-marius} 
(in what follows we discuss pairs $(x,y)$ with a symmetric complexity profile where $C(x\vert y) = C(y\vert x)$).

\subsection{Specific input data: secret key agreement with a typical incidence from a finite plane}
Let us focus on the case where the inputs $(x,y)$  represent a pair of typical incidences in a projective plane over a finite field $\mathbb{F}$ 
(we denote $n:=\lceil \log  | \mathbb{F} | \rceil$).
In this case the upper bound  \eqref{eq:cc} (which rewrites in this case to to $n$) is tight,  the communication complexity cannot be made better than $n-O(\log n)$, \cite{emirhan}. 
 Moreover, 
 \begin{itemize}
 \item[(i)] for every communication protocol, for its transcript $t$ we have 
 \begin{equation}
 \nonumber
 C(t) \gep I(t:x\vert y) + I(t: y \vert  x) \gep  n,
 \end{equation}
(the first inequality is known from \cite{andrei-marius} and the second one from  \cite{emirhan});
  \item[(ii)] this bound remains valid 
 \emph{even if the parties  agree on a sub-optimal secret key of size} $\delta n$ for any $\delta>0$,\cite{emirhan};
 \item[(iii)] if Alice and Bob agree on a secret key $w$ of maximal possible size $I(x:y) = n$,
then not only the  total communication complexity must be equal to $n$ but actually
 \emph{one of the parties} (Alice or Bob) must send 
 $\max\{C(x\vert y), C(y\vert x)\}\eqp n$ 
 bits of information, \cite{geoffroy-rustam}.
 \end{itemize} 
 We summarize: 
 \begin{itemize}
 \item \emph{even for a suboptimal key size} communication complexity of the protocol $\gep n$; 
\item  \emph{for an optimal key size}  the communication is very asymmetric --- all $n$ bits are sent by one of the participants.
\end{itemize}
 There remained a question:
 Does there exist a  protocol with a symmetric communication load (both Alice and Bob  send $\approx n/2$ bits)
with a suboptimal key size? In what follows we show that the answer to this question depends on whether the underlying field contains a proper subfield.

\subsection{Prime field: a negative result}

\begin{theorem}
\label{th:ska-negative}
Let $q$ be a prime number and $\mathbb{F}_q$  be  the field with $q$ elements. 
Let $\mathbb{FP}$ be the projective plane over  $\mathbb{F}_q$, and $(x,y)$ be a typical incidence in this plane
($x$ is a line in this projective plane, $y$ is a point in this line, and $C(x,y) \eqp 3 \log q$). 
Let us denote $n = \lceil \log q \rceil$.

We consider communication protocols where Alice is given as her input $x$ and Bob is given as his input $y$.
Assume that there exists a communication  protocol where
\begin{itemize}
\item Alice sends messages of total length $(\frac12+\epsilon)n$ bits to Bob, 
\item Bob  sends messages of total length $(\frac12+\epsilon)n$ bits to Alice,
\item at the end of the communication, Alice and Bob agree on a secret key $w$ of length $k$, satisfying
\(
C(w \vert  t ) \eqp C(w) \eqp k , 
\)
where  $t$ is the transcript of the protocol (the sequence of all messages exchanged between Alice and Bob during the protocol);
in other words,  the protocol reveals virtually no information about the secret to the eavesdropper.
\end{itemize}
We claim that 
for small enough $\epsilon$  the size of the secret key is much less than $\frac12 I(x:y)$, i.e., $k \ll n/2$. 
\end{theorem}
\begin{figure}[h]
\centering
				\begin{tikzpicture}[scale=0.65]
				  \draw \firstcircle node[above left]{\footnotesize$(0.5-\epsilon)n$};
				  \draw \secondcircle node [above right]{\footnotesize$(0.5-\epsilon)n$};
				  \draw \thirdcircle node [below] {}; 
				  \node at (95:0.05)   {\small $k$};
				  \node at (90:1.55) {\small $n-k$};
				  \node at (219:1.65) {\footnotesize$(0.5+\epsilon)n$};
				  \node at (321:1.65) {\footnotesize$(0.5+\epsilon)n$};
				  \node at (150:4.75) {\Large $x$};
				  \node at (30:4.75) {\Large $y$};
				  \node at (295:4.50) {\Large $z$};
				\end{tikzpicture}
\caption{Complexity profile for $(x,y,z)$ from Theorem~\ref{th:ska-negative}, cf. Fig.~\ref{fig:lemma-halving-example}.}
\label{fig:profile-ska}
\end{figure}
\begin{proof}
Let Alice and Bob agree on a secret key $w$ in protocol with transcript $t$. 
The fact that both Alice and Bob compute $w$ at the end of the protocol means that $C(w\vert t,x)$ and $C(w\vert t,y)$ are negligibly small. 
Security of  the key means that $I(w:t)$ is negligible, i.e., the transcript divulges virtually no information about the key.
Keeping in mind these observations, we define $z= \langle t,w\rangle$. We have
\(
C(z\vert x,y) = O(\log n)
\)
(given both $x$ and $y$, we can simulate the protocol and compute the transcript and the key). 
We may assume  that $C(t) \gep n$ (otherwise the size of the key is negligibly small, \cite{emirhan}).
On the other hand, since  Alice and Bob each send at most $(0.5+\epsilon)n$ bits, 
we have $C(t) \lep (1+2\epsilon)n$ and, moreover, 
\(
C(t\vert x) \lep  (0.5+\epsilon)n\text{ and }   
C(t \vert y) \lep  (0.5+\epsilon)n. 
\)

 Kolmogorov complexity of $z = \langle t,w\rangle$ is equal to  $C(t) + C(w)$ 
(the mutual information between $w$ and $t$ is negligible since protocol reveals no information about the secret). However, conditional on $x$ and conditional on $y$,
Kolmogorov complexities of $z$ and $t$ are essentially the same 
(given the transcript $t$ and the input of one of the parties, we can obtain the secret key $w$ for free).
It follows that 
\[
\begin{array}{rcl}
C(x\vert z) &\eqp& C(x,z) -  C(z)
 \eqp C(x) + C(z\vert x) -  C(z) \\
 &\eqp& C(x) + C(t\vert x) -  C(t,w) 
 \eqp C(x) + C(t\vert x) -  C(t) -  C(w) \\
 &\lep&  2n + (0.5+\epsilon) n -  n - k \eqp (1.5 + \epsilon)n - k.
\end{array}
\]
Similarly we obtain  
\(
C(y\vert z)   \lep (1.5 + \epsilon)n - k 
\)
and 
\[
\begin{array}{rcl}
C(x,y\vert z)& \eqp &C(x,y,z) -  C(z) \eqp C(x,y) + C(z\vert x,y) -  C(t) -  C(w)  \\
&\gep& 3n+0 - (1+2\epsilon)n - k  \eqp (2-2\epsilon)n -k.
\end{array}
\]
If  we assume now that 
 $k = \frac{n}2 \pm  O(\epsilon n)$,
we obtain 
\[
C(x\vert z) \lep n + O(\epsilon n),\ C(y\vert z) \lep n + O(\epsilon n),\  
C(x,y\vert z) \gep 1.5n - O(\epsilon n), 
\]
which  for small enough $\epsilon$ contradicts  Theorem~\ref{th:prime-field}. 
\end{proof}
\begin{remark}
Theorem~\ref{th:ska-negative} states that, for the given setting, in a communication protocol in which each party sends approximately $\frac12C(x\vert y) + O(\epsilon n) = \frac12C(y\vert x)  + O(\epsilon n)  = n/2  + O(\epsilon n) $  bits of information,
the size of the secret key cannot attain $\frac12I(x:y) = n/2$. 
Our proof (application of Theorem~\ref{th:prime-field}) actually  implies a  stronger bound: the size of the key 
cannot be greater than $3n/7 + O(\epsilon n) \ll \frac12I(x:y) $. 
\end{remark}

\subsection{Field with a large subfield: a positive result}

\begin{theorem}
\label{th:ska-positive}
Let $\mathbb{F}_q$  be  a field with $q$ elements, and $q=p^2$ for some integer $p$
(e.g., $p$ is prime and $q$ is a square of this prime number, or $p=2^k$ and $q = 2^{2k}$).

Let $\mathbb{FP}$ be the projective plane over  $\mathbb{F}_q$, and $(x,y)$ be a typical incidence in this plane
($x$ is a line in this projective plane, $y$ is a point in this line, and $C(x,y) = 3 \log q \pm O(\log n)$). 
We consider communication protocols where Alice is given as her input $x$ and Bob is given as his input $y$.
We claim that there exists a  communication protocol where 
\begin{itemize}
\item Alice sends a message  $m_A$ of length $n/2 $ bits to Bob,
\item Bob  sends a message  $m_B$ of length $n/2$ bits to Alice,
\item then Alice and Bob compute a secret key $w$ of length $n/2$ such that 
\[
C(w \vert \langle m_A, m_B\rangle) \gep n/2 ,
\]
where  $n = \lceil \log q \rceil$, 
i.e., the protocol reveals virtually no information about the secret to the eavesdropper.
\end{itemize}
\end{theorem}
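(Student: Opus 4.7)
The plan is to translate Theorem~\ref{th:cmrsv} into a protocol. The proof of that theorem constructs an explicit cover of the incidence graph $G^{\mathrm{PL}}_{\mathbb{F}_q}$ by induced subgraphs $\{H_i\}$ indexed by Baer subplanes, each with $|L_i| \approx |R_i| \approx q$ and $|E_i| \approx q^{3/2}$, in which every edge lies in at most $\mathrm{poly}(n)$ subgraphs. For the typical incidence $(x,y)$, write $z$ for a canonical index of an $H_z$ in this cover that contains $(x,y)$; the theorem then provides the complexity profile $C(z) \eqp 3n/2$, $C(z\mid x) \eqp C(z\mid y) \eqp n/2$, and $C(z \mid x,y) \eqp 0$. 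The goal of the protocol is to let Alice and Bob each recover $z$ after exchanging only $n/2$ bits and then to derive the key $w$ from $z$.

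Concretely, I would pick two hash functions, $f$ on lines and $g$ on points of the projective plane, each taking values in $\mathbb{F}_p$ (so of length $\lceil \log p \rceil = n/2$), with the reconstruction property that for a typical incidence the value $z(x,y)$ is determined both by $(x, g(y))$ and by $(y, f(x))$. Alice sends $m_A := f(x)$, Bob sends $m_B := g(y)$; after the exchange each party computes $z$ (Alice from $(x, m_B)$, Bob from $(y, m_A)$) and then sets $w$ to an $(n/2)$-bit canonical ``residual'' of $z$ modulo the transcript (for instance, applying a deterministic extractor to a canonical code of $z$, conditioned on $(m_A, m_B)$). Since $C(z) \eqp 3n/2$ while the transcript contains at most $n$ bits, the secret key satisfies $C(w \mid m_A, m_B) \ge n/2 - O(\log n)$ provided that the two hashes are sufficiently independent with respect to~$z$.

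The main obstacle is the construction of $f$ and $g$, and this is the point where the hypothesis $q = p^2$ enters essentially. Given a line $x$, there are about $p$ cover subplanes that contain $x$, and these subplanes partition the $\approx p^2$ points on $x$ into classes of size $p$; the role of $g$ is to separate these classes as $y$ ranges over $x$, and the role of $f$ is symmetric. I would build such hashes from the $\mathbb{F}_p$-structure induced by the identification $\mathbb{F}_q = \mathbb{F}_p[\alpha]$: after normalizing coordinates in a suitable chart, take $f(x)$ and $g(y)$ to be chosen $\mathbb{F}_p$-components compatible with the Baer subplane decomposition provided by Theorem~\ref{th:cmrsv}. Verifying the reconstruction property reduces to the same double counting over Baer subplanes that underlies Theorem~\ref{th:cmrsv} and is the technical heart of the argument.
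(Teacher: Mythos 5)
Your proposal and the paper's proof both hinge on the $\mathbb{F}_p$-substructure of $\mathbb{F}_{p^2}$, but you take a different and currently incomplete route. The paper does not go through Theorem~\ref{th:cmrsv} at all: it writes the incidence relation in affine form $y_0'+x_1'y_1'-x_2'=0$, decomposes each element of $\mathbb{F}_q$ as $a_0+a_1\xi$ with $a_0,a_1\in\mathbb{F}_p$, and reads off an explicit two-round protocol directly from the equation $x_2'=(g+fh)+(t+fs+hr)\xi+rs\xi^2$: Bob sends $s$, Alice then computes and sends $g+fh$, and the key is $f$; secrecy is verified by a short counting of Kolmogorov complexities. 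There is no Baer-subplane cover, no hash functions, and no extractor in the argument.

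The gaps in your version are concrete. First, your protocol is simultaneous: $m_A=f(x)$ depends only on Alice's input and $m_B=g(y)$ only on Bob's. The reconstruction property you need --- that the subplane index $z(x,y)$ is recoverable both from $(x,g(y))$ and from $(y,f(x))$ --- is a strong joint constraint, and you have not shown such $f,g$ exist. It is not the ``same double counting'' as in Theorem~\ref{th:cmrsv}: there, one only needs a cover of the edges with poly overlap, whereas here one needs a \emph{consistent labelling} of the $\approx p$ subplanes through every line and every point so that the label an endpoint sends determines the shared subplane on the other side. Notice that the paper's protocol is genuinely adaptive --- Alice's message $g+fh$ cannot be computed from $x$ alone; she needs Bob's $s$ to subtract the $rs\xi^2$ contribution from the $\mathbb{G}$-component of $x_2'$. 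You should at least justify why a non-adaptive version can work, or adopt the same adaptive structure. Second, ``apply a deterministic extractor to a canonical code of $z$, conditioned on $(m_A,m_B)$'' is not a construction: a single deterministic map cannot extract from all sources with $C(z\mid m_A,m_B)\ge n/2$, and you would need either Muchnik-style hashing (Proposition~\ref{p:muchnik}) or, as the paper does, an explicit algebraic choice of $w$ (here $w=f$) for which secrecy can be verified by a direct chain-rule computation. As written, the ``technical heart'' you defer is precisely the content of the theorem.
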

\begin{proof}
A line $x$ and a point $y$ in the projective plane $\mathbb{FP}$ can be specified by their projective coordinates $(x_0:x_1:x_2)$ and $(y_0:y_1:y_2)$ respectively. 
Without loss of generality, we  assume $x_0\not =0$ and $y_2\not=0$ and denote
\[
x_1' := x_1 / x_0,\ x_2' :=  - x_2 / x_0
\text{ and }
y_0' := y_0 / y_2,\ y_1' := y_1 / y_2.
\]
The incidence of $x$ and $y$ means that 
\(
x_0y_0 + x_1y_1 + x_2y_2 = 0,
\)
or equivalently 
\begin{equation}
\label{eq:incidence}
y_0' + x_1'y_1' - x_2' = 0.
\end{equation}
Since $q=p^2$, the field $\mathbb{F}_q$ contains a subfield $\mathbb{G}$ of size $p$, and there exists an element $\xi\in \mathbb{F}_q$ such that  
every element $\alpha \in \mathbb{F}_q$ can be represented as
\(
\alpha = a_0 + a_1 \cdot \xi
\)
for some $a_0,a_1\in \mathbb{G}$. So we may represent $x_i'$ and $y_i'$ as follows:
\[
x'_1 = f+ r\xi,\  \ y'_0 = g+t\xi,\ y'_1 =  h+ s \xi
\]
for some $f,g,h,r,s,t \in \mathbb{G}$. In this notation, \eqref{eq:incidence} rewrites to 
\begin{equation*}
(g+t\xi)+ (f+ r\xi) (h+ s \xi) = x_2'.
\end{equation*}
It follows that 
\begin{equation}
\label{eq:incidence-reduced-2}
x_2' = g+fh + (t+ fs + hr)\xi + rs\xi^2
\end{equation}
(The value $\xi^2$ can be represented as $u+v\xi$ for some $u,v\in \mathbb{G}$, but we do not need to specify these parameters.)
Let us recall that Alice knows all parameters derived from $x$ (including $f,r,x_2'$), and Bob knows all parameters derived from $y$ (including $g,h,s,t$).
We use the following  protocol.

\paragraph*{Communication protocol}

\begin{enumerate}
\item[] {\bf Round 1:} {Bob sends to Alice the value $m_1:= s$ (this message consists of $\log |\mathbb{G}| = n/2$ bits of information)}
\item[] {\bf Round 2:} {Alice computes $m_2 := g+fh$ and sends it to Bob
(this message also consists of $\log |\mathbb{G}| = n/2$ bits of information)}
\item[] {\bf Post-processing:} 
{Both participants compute the value $f$ and take it as the final result
(the secret key, which also consists of $\log |\mathbb{G}| = n/2$ bits of information).}
\end{enumerate}

\begin{claim}
Alice has enough information to compute  $m_2$.
\end{claim}
\begin{proof}[Proof of claim]
Initially, Alice is given the values of $x_1'=f+r\xi$ and $x_2' = u'+v'\xi$, where $f,r,u',v'$ are elements of $\mathbb{G}$. When she receives from Bob $s$, she gets all information to compute $rs\xi^2 = u'' +v''\xi$ (for some $u'',v''\in\mathbb{G}$). 
From \eqref{eq:incidence-reduced-2} it follows that 
\(
g+fh = u' - u''.
\)
\end{proof}

Alice is given the secret key $f$ as a part of her input. Bob, however, needs to do some computation to get this value. 

\begin{claim} Bob has enough information to compute the final result $f$.
\end{claim}
\begin{proof}[Proof of claim]
Initially, Bob was given the values $g, t, h,s\in \mathbb{G}$.
Bob receives from Alice the value
\(
g + f h,
\)
which is another  element  
of the field $\mathbb{G}$. This allows him to compute $f$ as
\(
((g + f h) - g) \cdot h^{-1}.
\)
\end{proof}
It remains to show that we reveal no information to the eavesdropper. The adversary can intercept the messages $m_1 = s$ and $m_2 = g+fh$. We need to show that these messages give no information about the produced secret key:

\begin{claim}
 $I(f : \langle m_1,m_2\rangle) = O(\log n)$.
\end{claim}
\begin{proof}[Proof of claim]
To specify the incidence $(x,y)$, it is enough to provide the values $f,g,h,r,s,t$ in $\mathbb{G}$. 
Thus, we have
\[
\begin{array}{rcl}
C(x,y) &\eqp &  C(f,g,h,s,r,t )  \\
  &\lep& C(m_1) + C(m_2) + C(f,g,h,s,r,t \vert  m_1,m_2)  \\
 &\lep& C(m_1) + C(m_2) +
C(s\vert m_1,m_2) + C(f\vert m_1,m_2) + C(h) \\
&& {}   + C(g\vert m_1,m_2,f,h) +  C(r) + C(t)  \\
&\lep& C(m_1) + C(m_2)  + C(f \vert  m_1, m_2)  + C(h) + C(r)  + C(t) \\
&\lep&  5 \log  |\mathbb{G}|  + C(f \vert  m_1, m_2)  \\
&\eqp & \frac52 n  + C(f \vert  m_1, m_2) 
\end{array}
\]
(in this calculation, $C(s \vert  m_1,m_2)$ vanishes since $m_1=s$, and $ C(g \vert m_1,m_2,f,h) $  vanishes since we can  compute $g$  given $f,h$ and the value of $g+fh$).

Since the incidence $(x,y)$ is typical, i.e., $C(x,y) \eqp 3n$, we obtain $ C(f \vert  m_1, m_2) \gep \frac{n}2 $.
Thus, 
\[
C(f \vert m_1, m_2) \gep C(f),
\] 
and the claim is proven.
\end{proof}
\end{proof}
\begin{remark}
\label{remark-profile}
Let \( z \) denote the tuple consisting of Alice’s message, Bob’s message, and the resulting secret key in the protocol from Theorem~\ref{th:ska-positive}. 
It is straightforward to verify that the complexity profile of \( (x, y, z) \) has the form shown in Fig.~\ref{fig:lemma-halving-example}. 
So we obtain another proof of Theorem~\ref{th:cmrsv}. 
In fact, this version of the proof is analogous to the explicit construction given in \cite[Theorem~9]{cmrsv}.
\end{remark}

\section{Conclusion}
Our proof of Theorem~\ref{th:prime-field} relies on a combinatorial property of the incidence graph of a projective plane over a prime field 
(the bipartite graph \( G^{\mathrm{PL}}_{\mathbb{F}_q} \) defined on p.~\pageref{def-graph-parameters}, with $\Theta(q^2)$ vertices and $\Theta(q^3)$ edges). 
Let \( A \) and \( B \) be subsets of vertices taken from the left and right parts of the graph, respectively, that is, from the sets of lines and points of the projective plane. 
If the cardinalities of both \( A \) and \( B \) are comparable to \( q \), then the number of edges connecting vertices in \( A \) and \( B \) is significantly smaller than \( q^{3/2} \). 
More precisely, it is bounded by \(O\left( q^{\frac32 - \delta}\right) \) for some  explicitly given \( \delta > 0 \).
According to  \cite{tao_improved},  any number less than $1/30$ can serve as $\delta$. 
The optimality of this bound is unknown.

A natural refinement of this result would be to increase the value of \( \delta \). 
At present, it is not known how large \( \delta \) can be in this combinatorial statement for projective planes over prime fields. 
It would be of interest to investigate other explicit constructions of bipartite graphs with comparable structural parameters  
(a similar number of vertices and edges) and with analogous or even stronger bounds on the number of edges between arbitrary vertex sets \( A \) and \( B \). 
Theorem~\ref{th:raz} shows that for a randomly chosen graph with the same number of vertices and edges as \( G^{\mathrm{PL}}_{\mathbb{F}_q} \),  
a similar property holds with \( \delta \) arbitrarily close to \( 1/2 \).  
Thus, the gap between bounds for explicit and random (implicit) constructions remains substantial.

Another natural direction for generalization is to consider the case where the sets \( A \) and \( B \) have cardinalities different from those considered above.  
Any progress on these combinatorial questions would contribute to a better  understanding of the information-theoretic properties of pairs \( (x, y) \) sharing large mutual information,  
and, ultimately, to new insights into problems in communication complexity.

\smallskip

We also emphasize that Question~\ref{q:1} on p.~\pageref{eq:2} (see also~\cite{27open}) remains  open.

\paragraph{Acknowledgments.}
The author is grateful to Ilya Shkredov for drawing attention to the paper~\cite{tao_improved}.
The author sincerely thanks the anonymous reviewers of 
for their careful reading and insightful suggestions, which helped to improve the clarity and presentation of the paper.

\bibliographystyle{plainurl}
\bibliography{barriers2025-doi}


\appendix

\section{Halving the complexities of two strings}

\subsection{Halving complexities with a hashing argument}

\begin{proposition}[An. Muchnik,\cite{muchnik2002conditional}]
\label{p:muchnik}
Let $x$ and $y$ be arbitrary strings of length at most $n$. Then there
exists a string $r$ of length $C(x | y)$ such that
\begin{itemize}
\item $C(r | x) \eqp0$ and
\item $C(x | r, y) \eqp0$.
\end{itemize}
Informally, $r$ is a ``digital fingerprint'' of $x$ (having a negligibly small complexity conditional on $x$) that can be used as a ``nearly optimal'' description of $x$ conditional on $y$.
\end{proposition}
\begin{corollary}
\label{c:muchnik}
Let $x$ and $y$ be arbitrary strings of length at most $n$ and let $k$ be a number less than $C(x)$. Then there
exists a string $r$ of length  $k$ such that
\begin{itemize}
\item $C(r) \eqp k$,
\item $C(r | x) \eqp 0$, and
\item $I(r : y) \eqp \max\{ 0, k -  C(x|y)\} $.
\end{itemize}
\end{corollary}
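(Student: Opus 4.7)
The plan is to deduce the corollary from Muchnik's theorem (Proposition~\ref{p:muchnik}) by treating two cases, $k\le C(x|y)$ and $k>C(x|y)$. In the first case I would apply Muchnik directly, and in the second case I would apply it after replacing $y$ by a suitable prefix of itself so that the resulting Muchnik-produced string has length close to $k$.

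In the easy case $k\le C(x|y)$ I apply Proposition~\ref{p:muchnik} to the pair $(x,y)$ to obtain a string $r^*$ of length $C(x|y)$ with $C(r^*|x)\eqp 0$ and $C(x|r^*,y)\eqp 0$, and take $r$ to be the prefix of $r^*$ of length $k$. Combining the two Muchnik conclusions with the chain rule forces $C(r^*|y)\eqp |r^*|$, i.e.\ $r^*$ is incompressible conditional on $y$; applying the chain rule once more to the decomposition $r^*=\langle r, r^*_{[k+1:|r^*|]}\rangle$ conditional on $y$ shows that a prefix of an incompressible string is incompressible, so $C(r|y)\eqp k$. From this, $C(r)\eqp k$ and $I(r:y)\eqp 0 = \max\{0,k-C(x|y)\}$ follow immediately, and $C(r|x)\eqp 0$ is inherited from $r^*$ since truncation is computable.

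In the harder case $k>C(x|y)$ I seek a prefix $y':=y_{[1:j]}$ of $y$ with $C(x|y')\eqp k$. To justify the existence of such $j$, let $f(j):=C(x|y_{[1:j]})$; the chain rule (moving one bit in or out of the condition) yields $|f(j{+}1)-f(j)|=O(\log n)$, while $f(0)\eqp C(x)>k$ and $f(|y|)\eqp C(x|y)<k$, so a discrete intermediate-value argument produces an index $j$ with $|f(j)-k|=O(\log n)$. Muchnik applied to $(x,y')$ then yields $r^*$ of length $C(x|y')=k\pm O(\log n)$ with $C(r^*|x)\eqp 0$ and $C(x|r^*,y')\eqp 0$; I truncate or pad $r^*$ by $O(\log n)$ bits to produce $r$ of length exactly $k$, and both Muchnik conclusions survive this adjustment (in particular $C(x|r,y)\eqp 0$, since $r^*$ is recoverable from $r$ with $O(\log n)$ advice and $y'$ is a prefix of $y$).

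The final step is to match the advertised value of $I(r:y)$ via two matching bounds on $C(r|y)$: the upper bound $C(r|y)\le C(r,x|y)\lep C(x|y)+C(r|x)\eqp C(x|y)$, and the lower bound $C(x|y)\le C(r,x|y)\eqp C(r|y)+C(x|r,y)\eqp C(r|y)$. Together these give $C(r|y)\eqp C(x|y)$, while the incompressibility argument from Case~1 applied to $r^*$ gives $C(r)\eqp k$, so $I(r:y)\eqp k-C(x|y) = \max\{0,k-C(x|y)\}$. The main technical nuisance I expect is the $O(\log n)$ bookkeeping around the discrete intermediate-value step and the length adjustment of $r^*$ by up to $O(\log n)$ bits; once this slack is absorbed, the rest of the argument is routine chain-rule manipulation of the two Muchnik conclusions.
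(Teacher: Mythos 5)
Your argument is correct, but it follows a genuinely different route from the paper's. The paper's proof applies Proposition~\ref{p:muchnik} twice and concatenates two pieces: first it extracts $r'$ of length $C(x|y)$ from $(x,y)$ (this handles $k\le C(x|y)$ by taking a prefix of $r'$), and then for $k>C(x|y)$ it applies Muchnik again to the pair $(x,r')$ to get a second string $r''$ of length $C(x)-C(x|y)$ with $C(r''|x)\eqp 0$ and $C(x|r',r'')\eqp 0$, finally setting $r=r'r'''$ with $r'''$ a prefix of $r''$ of the complementary length $k-C(x|y)$. You instead pre-condition on a suitably chosen prefix $y'=y_{[1:j]}$ of $y$ so that $C(x|y')\eqp k$ --- found via a discrete intermediate-value argument using the $O(\log n)$-Lipschitz property of $j\mapsto C(x|y_{[1:j]})$ --- and then invoke Muchnik only once, for $(x,y')$. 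Your route is arguably more uniform (a single extraction instead of a two-stage concatenation) and thematically closer to the continuity/intermediate-value ideas the paper uses in Theorems~\ref{th:halving-log} and~\ref{th:topological-simple}; the paper's route avoids any search over prefixes, at the price of an extra Muchnik application and a nontrivial verification that $r'$ and $r'''$ carry $O(\log n)$ mutual information so that $C(r'r''')\eqp k$. Your bookkeeping --- truncating/padding $r^*$ by $O(\log n)$ bits, recovering $r^*$ from $r$ and $y'$ from $y$ with $O(\log n)$ advice, and the chain-rule derivation of $C(r|y)\eqp C(x|y)$ from $C(r|x)\eqp 0$ and $C(x|r,y)\eqp 0$ --- all checks out.
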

\begin{proof}[Proof of  the corollary]
First of all, we apply Proposition~\ref{p:muchnik} and obtain a string $r'$  of length $C(x | y)$  such that 
 $C(r' | x) \eqp0$ and $C(x | r', y) \eqp0$ (i.e., $r'$ is a ``fingerprint'' of $x$ that can be used as a nearly optimal description of $x$ conditional on $y$).  
 The latter condition means that the gap between $C(x|y)$ and $C(x|r',y)$ is
 equal to $C(x|y) \eqp C(r')$.  Therefore, 
\[
I(r' : y) \eqp 0.
\]
If $k\le C(x|y)$, we define $r$ as the prefix of $r'$ of length $k$, and this completes the construction. 
Otherwise, we apply  Proposition~\ref{p:muchnik} again and obtain $r''$  of length $C(x)-C(x|y)$ such that
 $C(r'' | x) \eqp0$ and $C(x | r', r'') \eqp0$  (i.e., $r''$ is a ``fingerprint'' of $x$ that can be used as a nearly optimal description of $x$ conditional on $r'$). 
 Let $r'''$ be the prefix of $r''$ of length $k-C(x|y)$. 
  We set $r = \langle r',r'''\rangle$ (encoding of the pair of  strings). It is easy to see that
 \[
 C(r) \eqp C(r') + C(r''') \eqp k 
 \]
and 
\[
 C(r|y) \eqp C(r'|y)  \eqp C(x|y) , 
\] 
which implies $I(r:y) \eqp  k -  C(x|y) $.
\end{proof}

\begin{proof}[Proof of Theorem \ref{th:halving-log}]
Let us denote 
\(
\alpha = C(a|b), \ \beta = C(b|a), \ \gamma = I(a:b).
\)
In this notation we have 
\[
C(a) \eqp \alpha+\gamma , \ 
C(b) \eqp \beta + \gamma .
\]
We assume without loss of generality  that $\alpha \le \beta$.

We construct $z$ as a combination of two components, $p$ and $q$, where $p$ and $q$ serve as  suitable ``fingerprints'' of $a$ and $b$ respectively.
First of all, we apply Corollary~\ref{c:muchnik} and obtain $p$ such that 
\begin{itemize}
\item $C(p) = \frac{\alpha+\gamma}2$,
\item $C(p|a) \eqp0$ and, therefore, $I(p:a) = C(p) = \frac{\alpha+\gamma}2$,
\item $ I(p:b) \eqp  \max\{ 0,   \frac{\alpha+\gamma}2 - \alpha \}  =  \max\{ 0, \frac{\gamma - \alpha}2\} $.
\end{itemize} 

\smallskip
\noindent
\emph{Case~1:} assume that $\alpha\ge \gamma$. In this case we have $ I(p:b) \eqp  \max\{ 0, \frac{\gamma - \alpha}2\}  \eqp 0$.
Observe that  $ \alpha\ge \gamma$ implies  $\beta \ge \gamma$. 

We apply  Corollary~\ref{c:muchnik}  again and take $q$ such that
\begin{itemize}
\item $C(q) \eqp \frac{\beta+\gamma}2$,
\item $C(q|b) \eqp O(\log n)$ and, therefore, $I(q:b) \eqp C(q) \eqp \frac{\beta+\gamma}2$,
\item $I(q:a) \eqp \max\{ 0,   \frac{\beta+\gamma}2 - \beta \} \eqp 0$ and, therefore, $I(q:p) \lep I(q:a) \eqp 0$.
\end{itemize} 
We let $z = \langle p, q \rangle$ (encoding of the pair of strings) and obtain 
\[
\begin{array}{rcl}
C(a|z) &\eqp &C(a|p,q)  \eqp C(a,p,q) - C(p,q)\\
&\eqp& C(a,q) - C(p,q)
\eqp C(a) + C(q) - I(q:a) - C(p) -C(q)\\
&\eqp& C(a) - C(p) \eqp \alpha+ \gamma  - \frac{\alpha+\gamma}2 \eqp  \frac{\alpha+\gamma}2 \eqp \frac12C(a),  
\end{array}
\]
and 
\[
\begin{array}{rcl}
C(b|z) &\eqp &C(b|p,q)  \eqp C(b,p,q) - C(p,q)\\
&\eqp& C(b,p) - C(p,q)
\eqp C(b) + C(p) - I(p:b) - C(p) -C(q)\\
&\eqp& C(b) - C(q) \eqp \beta+ \gamma - \frac{\beta+\gamma}2 \eqp  \frac{\beta+\gamma}2 \eqp \frac12C(b).
\end{array}
\]

\smallskip
\noindent
\emph{Case~2:}  assume that $\alpha < \gamma$. In this case 
$ I(p:b) \eqp  \max\{ 0, \frac{\gamma - \alpha}2\}  \eqp \frac{\gamma - \alpha}2$.
We apply one more time Corollary~\ref{c:muchnik}  and take $q$ such that
\begin{itemize}
\item $C(q) \eqp \frac{\alpha+\beta}2$,
\item $C(q|b) \eqp 0$ and, therefore, $I(q:b) \eqp C(q) \eqp \frac{\alpha+\beta}2$,
\item $I(q:a) \eqp   \max\{ 0, \frac{\alpha+\beta}2 - \beta \}  \eqp 0 $  and, therefore, 
$I(q:p) \lep I(q:a)  \eqp0 $.
\end{itemize} 
Similarly to Case~1, we let $z = \langle p, q\rangle$  and obtain 
\[
\begin{array}{rcl}
C(a|z) \eqp C(a|p,q) \eqp C(a|p) \eqp C(a) - I(p:a)  \eqp \alpha+\gamma - \frac{\alpha+\gamma}2  \eqp \frac{\alpha+\gamma}2 \eqp \frac12C(a),
\end{array}
\] 
and 
\[
\begin{array}{rcl}
C(b|z) &\eqp &C(b|p,q)  \eqp C(b,p,q) - C(p,q)\\
&\eqp& C(b,p) - C(p,q)
\eqp C(b) + C(p) -I(p:b)  - C(p) -C(q)\\
&\eqp& C(b) - I(p:b) - C(q) \eqp \beta+ \gamma -  \frac{\gamma-\alpha}2  - \frac{\alpha+\beta}2 \eqp \frac{\beta+\gamma}2 \eqp \frac12C(b).
\end{array}
\]
\end{proof}

\subsection{Halving complexities with a topological argument}

In this section, we present an alternative (topological) proof for one special case of Theorem~\ref{th:halving-log}. 
This \emph{special case} serves as a running example to which we return repeatedly throughout the paper. 
It involves a pair $(x, y)$ with the complexity profile
\[
C(x) \approx C(y) \approx 2n \quad \text{and} \quad I(x:y) \approx n.
\]
The argument outlined below can be combined with Muchnik’s hashing method from the previous section and extended to a significantly more general case; see~\cite{shen-romashchenko} for details.

\begin{proposition}
\label{th:topological-simple}
Let $x,y$ be strings such that
\[
C(x)  = 2n + o(n), \ C(y)  = 2n + o(n), \ I(x:y)  = n + o(n).
\]
Then there exists a string $z$ such that 
\(
C(x|z)  \eqp n \text{ and } C(y|z)  \eqp n .
\)
\end{proposition}
\begin{proof} (The argument below is a simplified version of the more general proof presented in  \cite[Theorem~4]{shen-romashchenko}.)
We will show that the required $z$ can be constructed as a pair $a=\langle x',y'\rangle$, where $x'$ and $y'$ are prefixes of $x$ and $y$ respectively. We are going to prove that such prefixes of $x$ and $y$
can be chosen appropriately, although we cannot specify their lengths explicitly.
The construction of $z$ depends on two parameters: the length $\alpha$ of $x'$ (which is an integer between $0$ and $|x|$) and the length $\beta$ of $y'$ (which is an integer between $0$ and $|y|$). 
The possible values of these parameters can be thought of as a two-dimensional grid of points with integer coordinates, see the integer points inside the rectangle in Fig.~\ref{fig:topology}. 
For each integer pair $(\alpha,\beta)$ in this rectangle we get the corresponding prefixes  $x'  = x_{[1:\alpha]} $ and $y'  = y_{[1:\beta]} $, and accordingly the pair $z=\langle x',y'\rangle$. 
This $z$ gives us in turn a pair of values 
\(
(C(x|z), C(y|z)),
\)
which is a point in the picture  on the right in Fig.~\ref{fig:topology}.
Observe that the mapping
\begin{equation}
\label{eq:continuous-mapping}
(\alpha,\beta) \mapsto (C(x\, |\, \langle x_{[1:\alpha]},y_{[1:\beta]} \rangle), C(y\, |\, \langle x_{[1:\alpha]},y_{[1:\beta]} \rangle)
\end{equation}
satisfies the \emph{Lipschitz condition}: increasing $\alpha$ or $\beta$ by $1$, we add one bit to $x'$ or $y'$; this operation changes the values of $C(x|\langle x',y\rangle)$ and  $C(y|\langle x',y\rangle)$
by only $O(1)$ additive terms. 
In the rest of the proof, our aim is to show that there exists a pair $(\alpha,\beta)$ on the left in Fig.~\ref{fig:topology} that is mapped to the $O(1)$-neighborhood of the point $(n,n)$ on the right.

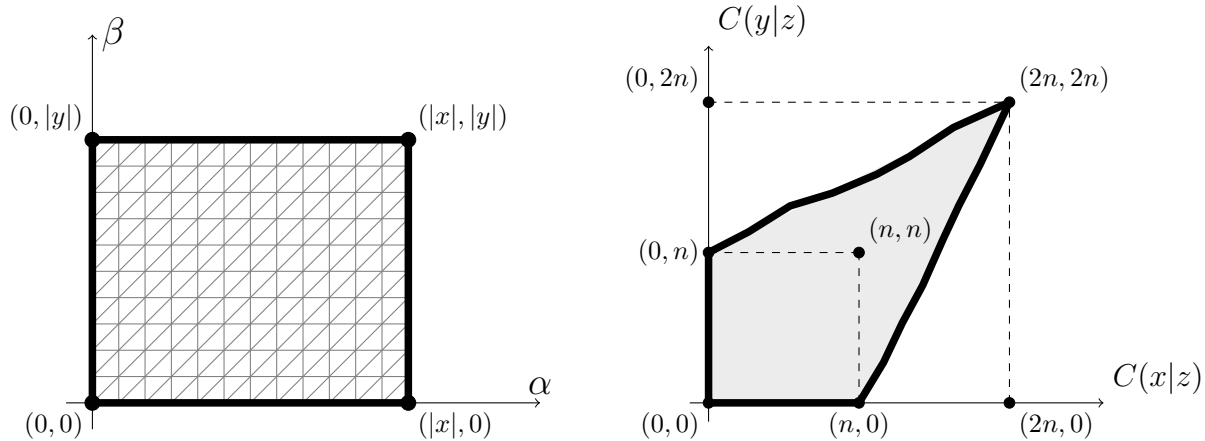
\begin{figure}
\begin{tikzpicture}[scale=0.35]

    \def\cx{12}  
    \def\cy{10}  

   \draw[fill=gray!15]  (0,0) -- (0,\cy) -- (\cx,\cy) -- (\cx,0)--cycle; 
  
    \draw[->] (-1,0) -- (17,0) node[above] {\Large $\alpha$};
    \draw[->] (0,-1) -- (0,14) node[right] {\Large $\beta$};

    \def\Nx{12}
    \def\Ny{10}

    \pgfmathsetmacro{\dx}{\cx/\Nx}
    \pgfmathsetmacro{\dy}{\cy/\Ny}

    \begin{scope}
        \foreach \i in {1,...,11} {
            \draw[thin, gray] ({\i*\dx}, 0) -- ({\i*\dx}, \cy);
        }
        \foreach \j in {1,...,9} {
            \draw[thin, gray] (0, {\j*\dy}) -- (\cx, {\j*\dy});
        }

        \foreach \i in {0,...,11} {
            \foreach \j in {0,...,9} {
                \draw[thin, gray] 
                    ({\i*\dx},{\j*\dy}) -- ({(\i+1)*\dx},{(\j+1)*\dy});
            }
        }
    \end{scope}
    
          \foreach \i in {0,...,12} {
                    \foreach \j in {0,...,10} {
                    \filldraw (\i,\j) circle (4pt) node {};
    		}
    	}

    \draw[black, line width=0.7mm] (0,0) rectangle (\cx,\cy);

    \filldraw (0,0) circle (8pt) node[below left] {$(0,0)$};
    \filldraw (\cx,0) circle (8pt) node[below right] {$(|x|,0)$};
    \filldraw (0,\cy) circle (8pt) node[above left] {$(0,|y|)$};
    \filldraw (\cx,\cy) circle (8pt) node[above right] {$(|x|,|y|)$};
    
\end{tikzpicture}
\hspace{5mm}
\begin{tikzpicture}[scale=0.25]
    \def\n{8}
   
    \pgfmathsetseed{2}

\tikzset{
        smooth wavy/.style={
            decorate, 
            decoration={snake, amplitude=1mm, segment length=6mm}
        },
        irregular/.style={
            decorate, 
            decoration={random steps, segment length=6pt, amplitude=1mm}
        }
    }

    \begin{scope}
        \path[fill=gray!15]
            (0,0) --
            (0,\n) --
            (0,\n) -- ({2*\n},{2*\n})  --
             ({2*\n},{2*\n}) -- (\n,0) --
            (\n,0) -- (0,0) -- cycle;
    \end{scope}

      \draw[black, line width=0.7mm, decorate, decoration={random steps, amplitude=0.7mm, segment length=6mm}] 
        (\n,0) -- ({2*\n},{2*\n});  

    \draw[black, line width=0.7mm, decorate, decoration={random steps, amplitude=0.7mm, segment length=6mm}] 
        (0,\n) -- ({2*\n},{2*\n}); 

    \draw[line width=0.7mm] (0,0) -- (0,\n);    
    \draw[line width=0.7mm] (0,0) -- (\n,0);    

    \draw[->] (-1,0) -- (21,0) node[above right] {\large $C(x|z)$};
    \draw[->] (0,-1) -- (0,19) node[above right] {\large $C(y|z)$};    
    
     \draw[dashed] (0,2*\n) -- (2*\n,2*\n);     
     \draw[dashed] (2*\n,0) -- (2*\n,2*\n);  
    
     \draw[dashed] (0,\n) -- (\n,\n);  
     \draw[dashed] (\n,0) -- (\n,\n);  
    
    \filldraw (0,0) circle (8pt) node[below left] {$(0,0)$};
    \filldraw (0,{2*\n}) circle (8pt) node[above left] {$(0,2n)$};
    \filldraw ({2*\n},0) circle (8pt) node[below right] {$(2n,0)$};
    \filldraw ({2*\n},{2*\n}) circle (8pt) node[above right] {$(2n,2n)$};
    \filldraw (\n,\n) circle (8pt) node[above right] {$(n,n)$};
    
     \filldraw (\n,0) circle (8pt) node[below] {$(n,0)$};
     \filldraw (0,\n) circle (8pt) node[left] {$(0,n)$};
    
\end{tikzpicture}
\caption{Domain and codomain of the mapping  \eqref{eq:continuous-mapping}.}
\label{fig:topology}
\end{figure}

We begin with the observation that the four points with coordinates 
\[
(0,0),\  (|x|,0),\  (0, |y|),\  (|x|,|y|)
\]
from the domain of the function \eqref{eq:continuous-mapping}  are
are mapped to the ($o(n)$-neighborhoods of) the points with coordinates 
\[
(2n,2n),\ (0,n), \ (n,0), \ (0,0)
\]
in the codomain\footnote{%
For example, $(\alpha,\beta) = (|x|,0)$ means that $z$ is a pair consisting of the entire $x$ and the empty prefix of $y$. Conditional on this $z$, complexity of $x$ vanishes, 
and conditional complexity of $y$ shrinks to $C(y|x) = n \pm o(n)$. The other three cases can be settled  in a similar way.%
}.

When we go along the vertical line $(|x|, 0)$--$(|x|,|y|)$ in the domain of \eqref{eq:continuous-mapping}, the string $z$ consists of the entire $x$ and an increasingly long prefix of $y$. Accordingly, the value of $C(x|z)$ remains equal to $O(1)$,
and $C(y|z)$ decreases from $C(y|z)\pm O(\log n)$ to $0$. That is, the image of $(\alpha,\beta)$ (taken from the domain of the function) moves along  the segment $(0,n)$--$(0,0)$ (in the codomain).
Similarly, when $(\alpha,\beta)$ runs the segment $(0,|y|)$--$(|x|,|y|)$ in the domain, the images run through the segment $(n,0)$--$(0,0)$ in the codomain, at least within $o(n)$ precision.

Now let $(\alpha,\beta)$ run through the segment $(0,0)$--$(|x|,0)$. We know that the image of $(\alpha,\beta)$ (i.e., the corresponding values of $(C(x|z), C(y|z))$) transitions  from
$(2n,2n)$ to $(0,n)$. However,  on this interval we cannot control exactly the behavior of $(C(x|z), C(y|z))$ since the image of $(0,0)$--$(|x|,0)$ does not need to be a straight line.
We only know that it is gradually descends from $(2n,2n)$ to $(0,n)$  and remains above the horizontal line $C(y|z) = n - o(n)$
(indeed, the complexity value $C(y|x')$ cannot be smaller than $C(y|x) = n - o(n)$).

Similarly, when $(\alpha,\beta)$ runs through the segment $(0,0)$--$(0, |y|)$, the corresponding pair of complexity values $(C(x|z), C(y|z))$  goes from $(2n,2n)$ to $(n,0)$. 
Thus, the boundary of the rectangle on the left in Fig.~\ref{fig:topology}
is mapped to the bold line on the right in the same figure. Observe that the point $(n,n)$ is inside the loop shown in bold on the right in Fig.~\ref{fig:topology}.
It remains to prove that some point from the domain of the mapping \eqref{eq:continuous-mapping} is mapped to the $O(1)$-neighborhood of $(n,n)$.  In what follows we do it in three steps.

\smallskip

\noindent
\textbf{Step 1: From discrete mapping to a continuous function.} The mapping  \eqref{eq:continuous-mapping} is defined only on the points with integer coordinates.
But we can extend it to the points with real coordinates (still inside of the rectangle $0\le \alpha \le |x|$, $0\le \beta \le |y|$. 
To this end, we split the grid of integer points into triangles (as shown in Fig.~\ref{fig:topology}); 
inside of each triangle, we define the function as the barycentric average of the values assigned to the vertices of this triangle.
The extended function is continuous and even Lipschitz continuous (as the original discrete function respected the condition of Lipschitz).

\smallskip
\noindent
\textbf{Step 2: Exact real-valued preimage of the point $(n,n)$.}  In our setting, we have a continuous mapping from a subset of $\mathbb{R}^2$ (the rectangle  $0\le \alpha \le |x|$, $0\le \beta \le |y|$, which is homeomorphic to the closed disk) to  $\mathbb{R}^2$.
We know that the boundary of the preimage is mapped to a curve (the bold contour on the right in Fig.~\ref{fig:topology}) that winds once around the point \((n,n)\).  
Moreover, when a point on the boundary of the preimage traverses the entire boundary once (the rectangle on the left in Fig.~\ref{fig:topology}),  
its image under the mapping moves along the contour on the right in Fig.~\ref{fig:topology}, likewise performing a complete revolution around the point \((n,n)\).  
It follows that there exists a point \((\alpha_0,\beta_0)\) in the preimage (possibly with non-integer coordinates) that is mapped exactly to \((n,n)\).  
This fact is a standard result from algebraic topology, following directly from Proposition~\ref{p:topology}  
(see, for instance, the discussion of the notion of the degree of a map and the proof of the \emph{drum theorem} in \cite[Chapter~26]{postnikov}).

\smallskip
\noindent
\textbf{Step 3: Approximate integer preimage of the point $(n,n)$.}  Since the mapping is Lipschitz continuous, we can replace
$(\alpha_0,\beta_0)$  by the closest point with integer coordinates $(\alpha_0',\beta_0')$. 
Obviously, this integer point is mapped by \eqref{eq:continuous-mapping}  to the $O(1)$-neighborhood of $(n,n)$. This observation completes the proof of the theorem.
\end{proof}

\section{Relativization does not change substantially the class of realizable complexity profiles}
\label{s:appendixB}

In this section we prove Proposition~\ref{p:unrelativize}.

\begin{proof}
Let $N = C(y_1,\ldots,y_n)$.  
We begin with the standard \emph{typization trick} and define the  following set, which has  suitable sizes of projections and sections: 
\[
\begin{array}{rl}
S := \{%
&
(\tilde y_1,\ldots, \tilde y_n) \ : \
 \forall\ i_1<\ldots<i_s,\  
C(\tilde y_{i_1},\ldots,\tilde y_{i_s} \mid  z)
\le C(y_{i_1},\ldots,y_{i_s} \mid z),  \\
& \forall\ i_1<\ldots<i_s,\ \forall\ j_1<\ldots<j_m,\ 
C(\tilde y_{i_1},\ldots,\tilde y_{i_s} \mid \tilde y_{j_1},\ldots,\tilde y_{j_m}, z)
\le C(y_{i_1},\ldots,y_{i_s} \mid y_{j_1},\ldots,y_{j_m}, z)
\}.
\end{array}
\]

\begin{lemma}[see, e.g., Lemma~1 in~\cite{muchnik2010stability} or Theorem~211 in~\cite{suv}]
\label{l:typization}
The sizes of the projections and sections of $S$ correspond 
to the relevant Kolmogorov complexities of $(y_1,\ldots,y_n)$ conditional on $z$:
\begin{equation}
\label{eq:comb-entropy}
\left.
\begin{array}{l}
(i)\ \text{the cardinality of the entire set }S\text{ is } 2^{C(y_1,\ldots,y_n | z) \pm O(\log N)}, \\
(ii)\ \text{$\forall\ j_1<\ldots<j_m$ the cardinality of the projection of }S\text{ onto the coordinates $(j_1,\ldots,j_m)$}\\
\phantom{((ii)}\text{is at most } 2^{C(y_{j_1},\ldots,y_{j_m} | z) + 1}\\
(iii)\ \text{for every partition }\{1,\ldots,n\}  = \{i_1,\ldots,i_s\} \cup \{j_1,\ldots,j_{n-s}\} 
\text{ and for every } (\tilde y_{j_1},\ldots, \tilde y_{j_{n-s}}) \\ 
\phantom{(iii)}\text{there are at most}\ 
2^{C( y_{i_1},\ldots,  y_{i_s} | y_{j_1},\ldots, y_{j_{n-s}}, z)+ 1} \ 
\text{tuples}\  (\tilde y_{i_1},\ldots, \tilde y_{i_s})\ \text{such that} \\ 
\phantom{(iii)}\text{the combination of }\ 
(\tilde y_{i_1},\ldots, \tilde y_{i_s})\ 
\text{and}\  (\tilde y_{j_1},\ldots, \tilde y_{j_{n-s}})   \
\text{gives an $n$-tuple that belongs to}\
S
\end{array}
\right\}
\end{equation} 
\end{lemma}

\begin{proof}[Sketch of the proof]
The upper bound in (i), i.e., 
\(
S\le  2^{C(y_1,\ldots,y_n | z) + O(\log N)}
\) 
and even a slightly tighter
\(
S\le  2^{C(y_1,\ldots,y_n | z) +1}, 
\) 
follows by counting:  
the total number of tuples $(\tilde y_1,\ldots,\tilde y_n)$ satisfying
\[
C(\tilde y_{1},\ldots,\tilde y_{n} \mid  z)
\le C(y_{1},\ldots,y_{n} \mid z)
\]
cannot exceed the number of programs of length $\le C(y_1,\ldots,y_n \mid z)$.  
Analogous counting arguments yield (ii) and (iii).

For the lower bound in (i), note that $(y_1,\ldots,y_n)\in S$.  
Given $z$ and all values 
\(
C(y_{i_1},\ldots,y_{i_s} \mid z)
\)
and
\(C(y_{i_1},\ldots,y_{i_s} \mid y_{j_1},\ldots,y_{j_m}, z)\), 
one can run the enumeration of the list of all elements of $S$ and identify $(y_1,\ldots,y_n)$ by its ordinal number in this list. Hence,
\[
C(y_1,\ldots,y_n \mid z) \le \log |S| + O(\log N),
\]
implying $|S| \ge 2^{C(y_1,\ldots,y_n \mid z) - O(\log N)}$.
\end{proof}

We only use the existence of at least one set $S$ satisfying~\eqref{eq:comb-entropy}.  
Given all relevant complexity values involving $y_1,\ldots,y_n$ and $z$, we can, by brute force, construct \emph{some} set
\(
\hat S \subset (\{0,1\}^*)^n
\)
(possibly different from $S$) satisfying conditions (i)–(iii) of~\eqref{eq:comb-entropy}.  
Among all such sets, we select lexicographically the first one.  
Although $S$ itself may be highly complex and depend intricately on $z$, the procedure revealing $\hat S$ is simple (though slow).  
Hence the Kolmogorov complexity of the list of all elements of $\hat S$ is small, only $O(\log N)$, since we need to specify only the exponents appearing in~\eqref{eq:comb-entropy}
and do not need to know $z$.

For most tuples in $\hat S$, each component of their complexity profile, computed with plain Kolmogorov complexities, is close to the logarithm of the cardinality of the corresponding projection of $\hat S$.
More specifically, we have the following lemma.

\begin{lemma}[see, e.g., Lemmas~3--4 in~\cite{muchnik2010stability}]
\label{l:typization-2}
For most $(x_1,\ldots,x_n)\in \hat S$,
\[
\begin{array}{rcl}
\forall \ j_1\ldots j_m\ C(x_{j_1},\ldots x_{j_m})& \eqp &\log
\left[
\text{cardinality of the projection of}\
 \tilde S\
\text{onto the coordinates}\
({j_1},\ldots {j_m})
\right]
\end{array}
\]
\end{lemma}

\begin{proof}[Sketch of the proof]
The upper bound
\begin{equation}
\label{eq:appB-1}
C(x_{j_1},\ldots x_{j_m}) \lep \log
\left[
\text{cardinality of the projection of}\
 \tilde S\
\text{onto the coordinates}\
({j_1},\ldots {j_m})
\right]
\end{equation}
is immediate: the complexity of a tuple within an effectively defined set cannot exceed the logarithm of the set’s size.

For the lower bound, consider a partition
$\{1,\ldots,n\} = \{j_1,\ldots,j_m\} \cup \{i_1,\ldots,i_{n-m}\}$.  
If the left-hand side of~\eqref{eq:appB-1} were much smaller than the right-hand side, then by part (iii) of~\eqref{eq:comb-entropy} we would conclude that the value 
\[
C(x_1,\ldots,x_n)
\eqp
C(x_{j_1},\ldots,x_{j_m}) + 
C(x_{i_1},\ldots,x_{i_{n-m}} \vert x_{j_1},\ldots,x_{j_m})
\]
is much smaller than 
\[
C(y_{j_1},\ldots,y_{j_m} \vert z) + 
C(y_{i_1},\ldots,y_{i_{n-m}} \vert y_{j_1},\ldots,y_{j_m}, z) \eqp 
C(y_1,\ldots,y_n \mid z).
\] 
This cannot be true for most tuples in $\hat S$, because by part (i) of~\eqref{eq:comb-entropy} we have $\log|\hat S| \eqp C(y_1,\ldots,y_n \mid z)$.
\end{proof}

Applying Lemma~\ref{l:typization-2}, we obtain a tuple $(x_1,\ldots,x_n)$ such that for all index sets $\{j_1,\ldots,j_m\}\subset\{1,\ldots,n\}$,
\[
C(x_{j_1},\ldots,x_{j_m})
\eqp
C(y_{j_1},\ldots,y_{j_m}\mid z),
\]
which completes the proof of the proposition.
\end{proof}

Now we can explain how the positive answer to Question~\ref{q:2} would imply the positive answer to Question~\ref{q:1}.
We restrict our attention to the principal case of Question~\ref{q:1}, where $\lambda \in (0,1)$.
\begin{corollary}
\label{cor:q2->q1}
Let $\lambda\in(0,1)$. 
Assume that for every $k$-tuple of strings $(y_1,\ldots,y_k)$
there exists a string $z$ such that
\[
C(y_i \mid z) = \lambda\, C(y_i) + O(\log C(y_1,\ldots,y_k))
\quad\text{for all } i=1,\ldots,k.
\]
Then for every $n$ and every $n$-tuple of strings $(x_1,\ldots,x_n)$
there exists another $n$-tuple $(x_1',\ldots,x_n')$ such that
\[
C(x_{i_1}',x_{i_2}',\ldots,x_{i_s}')
= \lambda\, C(x_{i_1},x_{i_2},\ldots,x_{i_s})
+ O(\log C(x_1,\ldots,x_n))
\]
for all $1\le i_1<i_2<\ldots<i_s\le n$.
\end{corollary}

\begin{proof}
Fix $\lambda\in(0,1)$ and an $n$-tuple $(x_1,\ldots,x_n)$, and let $N=C(x_1,\ldots,x_n)$.  
Define $k=2^n-1$ strings $y_J$, indexed by all nonempty subsets $J\subseteq\{1,\ldots,n\}$, as
\[
y_J := \langle x_{j_1},\ldots,x_{j_m}\rangle
\quad\text{for } J=\{j_1,\ldots,j_m\}.
\]
By the assumption, there exists a string $z$ such that for every $J$ we have
\(
C(y_J \mid z) \eqp \lambda\, C(y_J),
\)
which is equivalent to
\[
C(x_{j_1},\ldots,x_{j_m} \mid z)
\eqp
\lambda\, C(x_{j_1},\ldots,x_{j_m}).
\]
By Proposition~\ref{p:unrelativize}, there exists a tuple $(x_1',\ldots,x_n')$ such that for all $j_1<\ldots<j_m$,
\[
C(x_{j_1}',\ldots,x_{j_m}')
\eqp
C(x_{j_1},\ldots,x_{j_m}\mid z)
\eqp
\lambda\, C(x_{j_1},\ldots,x_{j_m}),
\]
as required.
\end{proof}

Although we have shown that, in general, the answer to Question~\ref{q:2} is negative, the above observation may still be useful for certain special cases 
(for strings $x_j$ or values of $\lambda$ of a specific form, where the answer to Question~\ref{q:2} becomes positive).

\end{document}